\documentclass[11pt,reqno,a4paper]{article}

\usepackage{amsmath}
\usepackage{amsthm}
\usepackage{amsfonts}
\usepackage{amssymb}
\usepackage[dvips]{graphicx}
\usepackage[dvips]{color}
\usepackage{latexsym}
\usepackage{enumerate}
\usepackage{xspace}
%\usepackage[active]{srcltx} % SRC Specials for DVI Searching

%\usepackage[bookmarksnumbered=true,
%citecolor=blue, colorlinks=false, hypertex]{hyperref}

\voffset=-17mm \hoffset=-17mm \setlength{\textwidth}{160 true mm}
\setlength{\textheight}{225 true mm}
\setlength{\emergencystretch}{2em}

\newtheorem{thm}{Theorem}[section]
\newtheorem{cor}[thm]{Corollary}
\newtheorem{lemma}[thm]{Lemma}

\newtheorem{example}[thm]{Example}

\numberwithin{equation}{section}

%%% Special Sets

\def \bI {\Bbb I}
\def \bN {\Bbb N}

\def \bR {\Bbb R}
\def \bK {\Bbb K}

\def \bC {\Bbb C}

%%%%%%%%%%%%%%%
\def \cA {{\cal A}}
\def \cB {{\cal B}}

\def \cH {{\cal H}}
\def \cI {{\cal I}}

\def \cM {{\cal M}}
\def \cS {{\cal S}}

\def \cX {{\cal X}}

\def \cE {{\cal E}}

\def \and {\, \mbox{\rm and}\, }
\def \sinc {\,{\rm sinc}\,}
\def \dist {\,{\rm dist}\,}

\def \span {\,{\rm span}\,}
\def \supp {\,{\rm supp}\,}

\def \tr {\,{\rm tr}\,}

%%% Sets and Spaces

%%%  Vectors

%%%  Matrices

%%% Mathematical Constants

\DeclareMathOperator*{\argmin}{argmin}

%%% Revisions

\begin{document}

\title{Optimal Sampling Points in Reproducing Kernel Hilbert Spaces\thanks{Supported
by Guangdong Provincial Government of China through the
``Computational Science Innovative Research Team" program.}}

\author{\quad Rui Wang\thanks{School of Mathematics, Jilin University, Changchun 130012, P. R. China. E-mail address: {\it rwang11@jlu.edu.cn}. Supported by Natural Science
Foundation of China under grants 11071250 and 11126149.} \quad and \quad Haizhang Zhang\thanks{Corresponding author. School of Mathematics and Computational Science and Guangdong Province Key Laboratory of Computational Science,
Sun Yat-sen University, Guangzhou 510275, P. R. China. E-mail address: {\it zhhaizh2@sysu.edu.cn}. Supported in part by Natural Science Foundation of China under grants 11101438 and 91130009, by the US Army Research Office, and by SRF for ROCS, SEM.}}

\date{}
\maketitle
\begin{abstract}
The recent developments of basis pursuit and compressed sensing seek to extract information from as few samples as possible. In such applications, since the number of samples is restricted, one should deploy the sampling points wisely. We are motivated to study the optimal distribution of finite sampling points. Formulation under the framework of optimal reconstruction yields a minimization problem. In the discrete case, we estimate the distance between the optimal subspace resulting from a general Karhunen-Lo\`{e}ve transform and the kernel space to obtain another algorithm that is computationally favorable. Numerical experiments are then presented to illustrate the performance of the algorithms for the searching of optimal sampling points.

\smallskip

\noindent {\bf Keywords:} sampling points, optimal distribution, reproducing kernels, the Karhunen-Lo\`{e}ve transform
\end{abstract}

\section{Introduction}
\setcounter{equation}{0}

Functions describing natural phenomenon or social activities need to be converted into discrete data that can be handled by modern computers. From this viewpoint, sampling is the foundation for signal processing and communication. The subject origined from the celebrated Shannon sampling theorem \cite{Shannon}, which gurantees the complete reconstruction of a band-limited function from its values on some equally-spaced points.
The elegant result motivates many follow-up studies, making sampling an important research subject in applied mathematics. We shall give a brief and partial introduction to the history and progresses.

Mathematically, sampling means to evaluate a function. To ensure the stability, it is arguable that sampling should only take place in function spaces where point evaluations are continuous. Such spaces when endowed with an inner product structure arise in many other areas of mathematics. They are termed as the reproducing kernel Hilbert spaces (RKHS), as by the Riesz's lemma there exists a function that is able to reproduce the function values through the inner product. In Shannon's theorem, the space of functions that are band-limited to $[-\pi,\pi]$ and are equipped with the inner product of $L^2(\bR)$ is an RKHS with the sinc function as its reproducing kernel. This interpretation gives the hope of searching for Shannon-type complete reconstruction formula for other RKHS. It was found in \cite{Nashed} that as long as one has a frame or a Riesz basis formed by the reproducing kernel, then a Shannon-type sampling formula is immediately available by the general theory of frames. They showed that many past sampling formulae can be obtained in this manner. Recently, the approach has been generalized to reproducing kernel Banach spaces \cite{Zhang2009} by frames for Banach spaces via semi-inner-products, \cite{Zhang2011}.

Shannon type formulae enable us to have lossless representation of a function that is usually defined on an uncountable continuous domain using countable data. Going from uncountable to countable is a remarkable progress. However, countable is still infinite and computers can not store or handle infinitely many data. This raises the question of how to reconstruct a function from its finite sample. For the crucial band-limited functions, two modified Shannon series have been proposed in the literature \cite{Jagerman,Qian}, where it was shown that over-sampling can lead to exponentially decaying approximation error. Sampling data often comes with some cost. When it is available, one is inclined to use as accurate reconstruction methods as possible. It has long been known that in the maximum sense, the best way of reconstruction in an RKHS is via the minimal norm interpolation \cite{MicchelliRivlin}. The approximation error for over-sampling in the Paley-Wiener space of band-limited functions is estimated in \cite{MXZ}.

In this note, we focus on another important question in sampling, which is seldom considered in the literature. Usually the number of sampling points in a practical application is limited. When that number is fixed, we ask what is the best strategy of deploying the sampling points, under the condition that the best reconstruction method is engaged. The study is also motivated by the recent development in basis pursuit \cite{CDS} and compressed sensing \cite{CRT}, which seek to extract information from as few samples as possible. Since the number of samples is restricted, we should of course distribute the sampling points wisely.

We shall formulate the question in the next section. It will become clear that the solution of the problem amounts to approximating the subspace spanned by the first few eigenvectors of a compact operator. When the operator is of finite rank, the eigenvectors can be obtained by the well-known Karhunen-Lo\`{e}ve transform ( also called principal component analysis in engineering). To extend the algorithm to operators usually defined by integrals in this application, we shall establish a general Karhunen-Lo\`{e}ve transform in Section 3. An alternative approach by subspace approximation that can significantly reduce computational cost will be introduced in Section 4. Various examples by numerical experiments will be presented in Section 5. The study will lead to algorithms for the searching of the optimal distribution of finite sampling points for commonly-used RKHS.

\section{Formulation}
\setcounter{equation}{0}
A natural choice of background function spaces for sampling is reproducing kernel Hilbert spaces (RKHS). Let $X$ be a prescribed metric space where functions of interest are defined. An RKHS on $X$ is a Hilbert space $\cH$ of functions on $X$ such that for each $x\in X$, the point evaluation functional
$$
\delta_x(f):=f(x),\ \ f\in\cH
$$
is continuous. An RKHS $\cH$ possesses a unique {\it reproducing kernel} \cite{Aronszajn1950}, which is a function on $X\times X$ characterized by the properties that for all $f\in \cH$ and $x\in X$, $K(x,\cdot)\in \cH$ and
\begin{equation}\label{reproducing}
f(x)=(f,K(x,\cdot))_{\cH},
\end{equation}
where $(\cdot,\cdot)_{\cH}$ denotes the inner product on $\cH$. On the other hand, the reproducing kernel $K$ uniquely determines the RKHS $\cH$. Thus, the RKHS of a reproducing kernel $K$ is usually denoted by $\cH_K$. For more information on reproducing kernels, see \cite{Aronszajn1950,LMWX,MXZ2006,Schoenberg}.

We emphasize that an RKHS should first be a Hilbert space of functions, which implies that a function in the space has zero norm if and only if it vanishes everywhere. For instance, the Paley-Wiener space
$$
\cB:=\{f\in C(\bR^d)\cap L^2(\bR^d):\supp\hat{f}\subseteq [-\pi,\pi]^{d}\}
$$
is an RKHS. In this paper, the Fourier transform $\hat{f}$ of $f\in L^1(\bR^d)$ is defined by
$$
\hat{f}(\xi):=\frac1{(\sqrt{2\pi})^{2d}}\int_{\bR^d}f(x)e^{-i x\cdot \xi}dx,\ \ \xi\in\bR^d,
$$
 where $x\cdot \xi$ is the standard inner product on $\bR^d$. The norm on $\cB$ inherits from that in $L^2(\bR^d)$. The reproducing kernel for the Paley-Wiener space $\cB$ is the sinc function
$$
\sinc (x,y)=\prod_{j=1}^d\frac{\sin\pi(x_j-y_j)}{\pi(x_j-y_j)},\ \ x,y\in\bR^d.
$$

We consider the deployment of finite sampling points in an RKHS in this paper. Let $\cH_K$ be an RKHS on a metric space $X$ and the number $n$ of sampling points be fixed. The choice of the sampling points depends on the method of reconstruction and the measurement of the approximation error. For most applications, one desires to reconstruct values of the function considered on a compact subspace $\Omega\subseteq X$. The reconstruction error will be measured by the norm in $L^p_\mu(\Omega)$. Here $p\in[1,+\infty]$, $\mu$ is a finite positive Borel measure on $\Omega$, and the Banach space $L^p_\mu(\Omega)$ consists of Borel measurable functions $f$ on $\Omega$ that satisfy
$$
\|f\|_{L^p_\mu(\Omega)}:=\biggl(\int_\Omega |f(t)|^pd\mu(t)\biggr)^{1/p}<+\infty,\ \ 1\le p<+\infty
$$
and
$$
\|f\|_{L^\infty_\mu(\Omega)}:=\inf \{c\ge0: |f|\le c\mbox{ almost everywhere on } \Omega\mbox{ with respect to }\mu\}<+\infty.
$$

We shall assume throughout this paper that $K$ is continuous on $\Omega$. We observe by the reproducing property (\ref{reproducing}) for all $x,y\in X$ and $f\in\cH_K$ that
$$
\begin{array}{ll}
|f(x)-f(y)|&=|(f,K(x,\cdot)-K(y,\cdot))_{\cH_K}|\\&\le \|f\|_{\cH_K}\|K(x,\cdot)-K(y,\cdot)\|_{\cH_K}\\
&=\|f\|_{\cH_K}\sqrt{K(x,x)-K(x,y)-K(y,x)+K(y,y)}.
\end{array}
$$
Therefore, every function $f\in \cH_K$ belongs to the space $C(\Omega)$ of continuous functions on $\Omega$ equipped with the usual maximum norm. Consequently, $\cH_K\subseteq L^p_\mu(\Omega)$ for all $1\le p\le +\infty$ and all finite Borel measures $\mu$ on $\Omega$.

Let $\cX:=\{x_j:1\le j\le n\}$ be a choice of $n$ sampling points. The sample data of a function $f\in\cH_K$ is hence of the form
$$
\cI_\cX(f):=\{f(x_j):1\le j\le n\}.
$$
A reconstruction method $\cA$ is then a mapping from $\bC^\cX$ to $L^p_\mu(\Omega)$. For a particular $f\in\cH_K$, the reconstruction error is measured by
$$
\|f-\cA(\cI_\cX(f))\|_{L^p_\mu(\Omega)}.
$$
We then follow the general setting of optimal sampling in \cite{MicchelliRivlin} and \cite{ZhangThesis}, that is, we measure the performance of a reconstruction method $\cA$ by
$$
\rho(\cA,\cX):=\sup\{\|f-\cA(\cI_\cX(f))\|_{L^p_\mu(\Omega)}:\ f\in\cH_K,\ \|f\|_{\cH_K}\le 1\}.
$$
Since we are concerned with the optimal choice of sampling points only, we shall try to remove the reconstruction method from the picture. To this end, we shall use the optimal reconstruction algorithm $\cA_\cX$ for each choice of sampling points $\cX$. Namely,
\begin{equation}\label{cacx}
\rho(\cA_\cX,\cX)=\inf\{\rho(\cA,\cX):\mbox{ among all mapping }\cA\mbox{ from }\bC^\cX\mbox{ to }L^p(\Omega,d\mu)\}.
\end{equation}
Finally, our problem reduces to finding the sampling points $\cX$ that minimizes the function
$$
\cE(\cX):=\rho(\cA_\cX,\cX),\ \ \cX\in X^n.
$$

The optimal reconstruction algorithm $\cA$ is known to be the minimal norm interpolation \cite{MicchelliRivlin,ZhangThesis}. The following lemma also gives the reconstruction error.

\begin{lemma}\label{optimalalgorithm}
For each set of sampling points $\cX\in X^n$, the optimal reconstruction method $\cA_\cX$ satisfying (\ref{cacx}) is given by
$$
\cA_\cX(\cI_\cX(f)):=\argmin\{\|g\|_{\cH_K}:\ g\in\cH_K,\ \cI_\cX(g)=\cI_\cX(f)\}.
$$
The associated reconstruction error is of the form
$$
\rho(\cA_\cX,\cX):=\sup\{\|f\|_{L^p_\mu(\Omega)}:\ f\in \cH_K,\ \|f\|_{\cH_K}\le 1,\ \cI_\cX(f)=0\}.
$$
\end{lemma}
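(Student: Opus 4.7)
The plan is to prove the lemma via a two-sided bound on $\rho(\cA,\cX)$: a universal lower bound valid for any reconstruction method $\cA$, matched by an upper bound for the minimal norm interpolant, both equaling the stated supremum over the null space.

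First, I would establish the lower bound by a symmetrization argument. Fix any $\cA\colon\bC^\cX\to L^p_\mu(\Omega)$ and any $h\in\cH_K$ with $\|h\|_{\cH_K}\le 1$ and $\cI_\cX(h)=0$. Then $\cI_\cX(h)=\cI_\cX(-h)=\cI_\cX(0)$, so both $h$ and $-h$ are reconstructed to the same element $\cA(0)\in L^p_\mu(\Omega)$. By the triangle inequality in $L^p_\mu(\Omega)$,
$$
2\|h\|_{L^p_\mu(\Omega)}\le \|h-\cA(0)\|_{L^p_\mu(\Omega)}+\|-h-\cA(0)\|_{L^p_\mu(\Omega)}\le 2\rho(\cA,\cX),
$$
so $\|h\|_{L^p_\mu(\Omega)}\le\rho(\cA,\cX)$. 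Taking the supremum over such $h$ and then the infimum over $\cA$ yields one direction.

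Next, for the upper bound, I would use the RKHS orthogonal decomposition to analyze the minimal norm interpolant. Let $V:=\span\{K(x_j,\cdot):1\le j\le n\}\subseteq\cH_K$. The reproducing property (\ref{reproducing}) shows that $V^{\perp}=\{h\in\cH_K:\cI_\cX(h)=0\}$. Any $f\in\cH_K$ admits the orthogonal decomposition $f=g+r$ with $g\in V$ and $r\in V^\perp$. Since $g$ interpolates the data and lies in $V$, while any other interpolant differs from $g$ by an element of $V^{\perp}$, a Pythagorean computation shows $g$ is exactly the minimal norm interpolant $\cA_\cX(\cI_\cX(f))$. Therefore $f-\cA_\cX(\cI_\cX(f))=r\in V^\perp$, and $\|r\|_{\cH_K}^2=\|f\|_{\cH_K}^2-\|g\|_{\cH_K}^2\le 1$. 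Hence
$$
\|f-\cA_\cX(\cI_\cX(f))\|_{L^p_\mu(\Omega)}=\|r\|_{L^p_\mu(\Omega)}\le\sup\{\|h\|_{L^p_\mu(\Omega)}:h\in\cH_K,\ \|h\|_{\cH_K}\le 1,\ \cI_\cX(h)=0\},
$$
and taking sup over $\|f\|_{\cH_K}\le 1$ gives the matching upper bound.

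Combining the two bounds shows that $\cA_\cX$ attains the infimum in (\ref{cacx}) and that $\rho(\cA_\cX,\cX)$ equals the stated supremum. I do not expect a serious obstacle: the only point requiring care is the identification of the minimal norm interpolant with the $V$-component of the orthogonal decomposition, which rests on the reproducing property to identify $V^\perp$ with $\ker\cI_\cX$; everything else is the standard symmetrization/Pythagoras package used in optimal recovery.
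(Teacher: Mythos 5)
Your proof is correct. The paper itself gives no argument for this lemma---it simply defers to the optimal recovery literature (Micchelli--Rivlin and Zhang's thesis)---and what you have written is exactly the classical proof from那 that literature: the symmetrization lower bound $2\|h\|_{L^p_\mu(\Omega)}\le\|h-\cA(0)\|+\|-h-\cA(0)\|\le 2\rho(\cA,\cX)$ valid for every method $\cA$, matched by the upper bound for the minimal norm interpolant obtained from the orthogonal decomposition $\cH_K=\cS_\cX\oplus\cS_\cX^\perp$ with $\cS_\cX^\perp=\ker\cI_\cX$ via the reproducing property. The two steps you flag as needing care are handled properly: the identification of the interpolant with the $\cS_\cX$-component (which also gives existence and uniqueness of the argmin on the range of $\cI_\cX$, the only data values that matter for $\rho$), and the fact that the residual $r=f-\cA_\cX(\cI_\cX(f))$ lies in the unit ball of $\ker\cI_\cX$ by Pythagoras. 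No gaps.
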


A reproducing kernel determines everything about the corresponding RKHS. The following simple observation fulfills this hope. Set
\begin{equation}\label{kernelspace}
\cS_\cX:=\span\{K(t,\cdot):\ t\in \cX\}.
\end{equation}
We shall impose another assumption through the paper that for every set of pairwise distinct sampling points $\cX$, the matrix
$$
K[\cX]:=[K(x_j,x_k):\ 1\le j,k\le n]
$$
is nonsingular. A reproducing kernel is at the same time a positive-definite function, \cite{Aronszajn1950}. Thus, $K[\cX ]$ is strictly positive-definite. With this assumption, $\cS_\cX$ is $n$-dimensional with the orthonormal basis
\begin{equation}\label{onb}
u_j=\sum_{k=1}^n\alpha_{jk}K(x_k,\cdot),\ \ 1\le j\le n,
\end{equation}
where
\begin{equation}\label{coealpha}
[\alpha_{j,k}:1\le j,k\le n]=(K[\cX])^{-1/2}.
\end{equation}

\begin{cor}
Let $\phi_\cX$ be defined by
$$
\phi_\cX(x):=\dist(K(x,\cdot),\cS_\cX):=\min\{\|K(x,\cdot)-g\|_{\cH_K}:\ g\in\cS_\cX\},\ \ x\in X.
$$
Then it holds true for each $x\in\Omega$ that
\begin{equation}\label{reformulation}
\phi_\cX(x)=\sup\{|f(x)|:f\in\cH_K,\ \|f\|_{\cH_K}\le 1,\ \cI_\cX(f)=0\}.
\end{equation}
Furthermore, for each $p\ge 1$
\begin{equation}\label{erro}
\rho(\cA_\cX,\cX)\le\|\phi_\cX\|_{L^p_\mu(\Omega)},
\end{equation}
and for the special case when $p=+\infty$,
\begin{equation}\label{errorinfty}
\rho(\cA_\cX,\cX)=\|\phi_\cX\|_{L^\infty_\mu(\Omega)}.
\end{equation}
\end{cor}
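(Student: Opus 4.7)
The plan is to decompose the proof into three pieces following the three claims, working with the orthogonal complement of $\cS_\cX$ in $\cH_K$.

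First, for \eqref{reformulation}, I would use the reproducing property to identify the constraint set. Since $f(x_j)=(f,K(x_j,\cdot))_{\cH_K}$, the condition $\cI_\cX(f)=0$ is equivalent to $f\perp \cS_\cX$, i.e.\ $f\in \cS_\cX^\perp$. Letting $P$ denote the orthogonal projection onto $\cS_\cX^\perp$, one has $\phi_\cX(x)=\|K(x,\cdot)-P_{\cS_\cX}K(x,\cdot)\|_{\cH_K}=\|PK(x,\cdot)\|_{\cH_K}$. For any $f\in \cS_\cX^\perp$ with $\|f\|_{\cH_K}\le 1$, Cauchy--Schwarz and the identity $(f,K(x,\cdot))=(f,PK(x,\cdot))$ give $|f(x)|\le \|PK(x,\cdot)\|_{\cH_K}=\phi_\cX(x)$. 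Equality is attained (when $\phi_\cX(x)>0$) by the normalized witness $f_x:=PK(x,\cdot)/\|PK(x,\cdot)\|_{\cH_K}$, which lies in $\cS_\cX^\perp$ with unit norm and satisfies $f_x(x)=\phi_\cX(x)$; when $\phi_\cX(x)=0$ the claim is trivial.

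For \eqref{erro}, I would just combine Lemma~\ref{optimalalgorithm} with the pointwise bound from \eqref{reformulation}: any admissible $f$ obeys $|f(t)|\le \phi_\cX(t)$ for all $t\in\Omega$, so integrating the $p$-th power against $\mu$ and taking $p$-th roots yields $\|f\|_{L^p_\mu(\Omega)}\le \|\phi_\cX\|_{L^p_\mu(\Omega)}$, and then the sup over admissible $f$ gives \eqref{erro}.

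For the equality \eqref{errorinfty}, the upper bound is the $p=\infty$ case of \eqref{erro}. The reverse bound is the delicate part: one must convert the pointwise witnesses $f_x$ into a genuine lower bound on an essential supremum. The idea is to exploit the continuity of $K$ on $\Omega$, which forces $\phi_\cX$ and each $f_x$ to be continuous on $\Omega$; for a continuous function the $L^\infty_\mu$ norm coincides with $\sup_{x\in\supp\mu}|\cdot|$. Pick any $x^*\in\supp\mu$; then $f_{x^*}$ is admissible and satisfies $|f_{x^*}(x^*)|=\phi_\cX(x^*)$, so by continuity every neighbourhood of $x^*$ (which has positive $\mu$-measure) contains points where $|f_{x^*}|$ is arbitrarily close to $\phi_\cX(x^*)$. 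Hence $\|f_{x^*}\|_{L^\infty_\mu(\Omega)}\ge \phi_\cX(x^*)$, and taking the supremum over $x^*\in\supp\mu$ gives $\rho(\cA_\cX,\cX)\ge \sup_{x^*\in\supp\mu}\phi_\cX(x^*)=\|\phi_\cX\|_{L^\infty_\mu(\Omega)}$.

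The main obstacle is the last step: ensuring that the pointwise attainment in \eqref{reformulation} survives the passage to the essential supremum defining $\|\cdot\|_{L^\infty_\mu(\Omega)}$. The resolution rests entirely on the continuity of the reproducing kernel on $\Omega$ (which was assumed at the outset) together with the fact that the supremum of a continuous function over $\supp\mu$ equals its $\mu$-essential supremum; the rest of the argument is bookkeeping with the orthogonal decomposition $\cH_K=\cS_\cX\oplus \cS_\cX^\perp$.
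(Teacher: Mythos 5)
Your proposal is correct and follows essentially the same route as the paper: the reproducing property plus Cauchy--Schwarz for the upper bound $|f(x)|\le\phi_\cX(x)$, the normalized projection of $K(x,\cdot)$ onto $\cS_\cX^\perp$ as the witness attaining \eqref{reformulation}, and an exchange of the two suprema for \eqref{errorinfty}. The only difference is that you are somewhat more careful than the paper in the $p=+\infty$ step, where the paper silently identifies $\|f\|_{L^\infty_\mu(\Omega)}$ with $\sup_{x\in\Omega}|f(x)|$ while you correctly route the argument through $\supp\mu$ and the continuity of the functions involved.
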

\begin{proof}
Let $f$ be an arbitrary function in $\cH_K$ such that $\|f\|_{\cH_K}\le1$ and $\cI_\cX(f)=0$. Then by the reproducing property (\ref{erro}),
$$
(f,K(x_j,\cdot))_{\cH_K}=0\mbox{ for all }1\le j\le n.
$$
It follows that $f$ is orthogonal to every $g\in\cS_\cX$. We hence see that
$$
|f(x)|=|(f,K(x,\cdot)-g)_{\cH_K}|\le \|f\|_{\cH_K}\|K(x,\cdot)-g\|_{\cH_K}\le \|K(x,\cdot)-g\|_{\cH_K}.
$$
As the above equation is true for all $g\in \cS_\cX$, we get that $|f(x)|\le \phi_\cX(x)$, $x\in X$. As a result, it holds for all $p\ge 1$ that
$$
\rho(\cA_\cX,\cX)\le \|\phi_\cX\|_{L^p_\mu(\Omega)}.
$$

On the other hand, letting $f$ be the orthogonal projection of $K(x,\cdot)$ onto $\cS_\cX$ and then be normalized to a unit vector yields (\ref{reformulation}). Thus, for $p=+\infty$,
$$
\begin{array}{rl}
\rho(\cA_\cX,\cX)&=\sup\{\sup\{|f(x)|:x\in\Omega\}:f\in\cH_K,\ \|f\|_{\cH_K}\le 1,\ \cI_\cX(f)=0\}\\
&=\sup\{\sup\{|f(x)|:f\in\cH_K,\ \|f\|_{\cH_K}\le 1,\ \cI_\cX(f)=0\}:x\in\Omega\}\\
&=\|\phi_\cX\|_{L^\infty_\mu(\Omega)},
\end{array}
$$
which proves (\ref{errorinfty}).
\end{proof}

By the above corollary, we shall hence try to minimize the quantity $\|\phi_\cX\|_{L^p_\mu(\Omega)}$ as a way to bound the intrinsic error $\rho(\cA_\cX,\cX)$. A simple calculation tells that
\begin{equation}\label{phi}
\phi_\cX^2(x)=K(x,x)-\sum_{j=1}^n|u_j(x)|^2,\ \ x\in X.
\end{equation}
This together with (\ref{onb}) and (\ref{coealpha}) gives a function about $\cX$ that needs to be minimized. The complicated form of the function coped with the nonlinearity of the reproducing kernel makes directly minimizing this function rather difficult. Before discussing alternative computational methods, we present two simple examples to demonstrate that the optimal points might not be equally-spaced distributed in the reconstruction domain $\Omega$.

\begin{example} In this trivial example, we let $X=\bR^d$, $\Omega$ a compact subset in $\bR^d$ and $n=1$. The reproducing kernel is given by a radial basis function
$$
K(x,y):=\varphi(\|x-y\|),\ \ x,y\in\bR^d
$$
where $\|x\|$ denotes the standard Euclidean norm on $\bR^d$. The function $\varphi$ is a univariate function that defines a reproducing kernel in the above manner. By Schoenberg's theorem \cite{Schoenberg}, $\varphi(\sqrt{\cdot})$ must be a completely monotone function. In particular, $\phi$ is nonincreasing. For simplicity, we also assume that $\varphi(0)=1$. We shall use the space $C(\Omega)$ to measure the reconstruction error. The optimal sampling point $x_0$ is hence the minimizer of
which leads to
$$
\begin{array}{ll}
\min_{t\in\bR^d}\|\phi_t^2\|_{C(\Omega)}^2&\displaystyle{=\min_{t\in \mathbb{R}^d}\max_{x\in\Omega} 1-|K(t,x)|^2}\\
&\displaystyle{=1-\max_{t\in\mathbb{R}^d}\min_{x\in\Omega} |K(t,x)|^2}\\
&\displaystyle{=1-\max_{t\in\mathbb{R}^d}\min_{x\in\Omega} \varphi^2(\|x-t\|)}\\
&\displaystyle{=1-\varphi^2(\min_{t\in\mathbb{R}^d}\max_{x\in\Omega}\|x-t\|)}.
\end{array}
$$
By the above equation, $x_0$ is the point has a minimal radius $r$ for which $\Omega\subseteq \{x:\|x-x_0\|\le r\}$. Particularly, for $d=1$,
we should choose $x_0$ as the mid-point of the end points of $\Omega$.
\end{example}

Unlike the above example, our second example shows that nonlinearity could occur as the number of sampling points exceeds $1$. The analysis of this example of two sampling points is already rather tedious but elementary, and is thus omitted.

\begin{example}\label{example2}
In this example, we let $X=\bR$, $\Omega=[a,b]\subseteq \bR$, $n=2$ and consider the exponential kernel
$$
K(x,y):=e^{-\|x-y\|}, \ x,y\in\bR.
$$
In this case, for $\cX:=\{x_1,x_2\}$,
$$
\phi_\cX(x)=1-V(x,x_1,x_2)
$$
where
$$
V(x,x_1,x_2):=\frac{e^{-2\|x_1-x\|}+e^{-2\|x_2-x\|}-2e^{-(\|x_1-x\|+\|x_2-x\|+\|x_1-x_2\|)}}{1-e^{-2\|x_1-x_2\|}}.
$$
The optimal sampling points $x_1,x_2$ is the minimizer of
$$
\sup_{x_1,x_2\in\mathbb{R}}\min_{x\in\Omega}V(x,x_1,x_2).
$$
Let $L:=b-a$. After some careful but elementary analysis, it can be found that the optimal sampling points are
\begin{equation}\label{optpoint1}
x_1=a-\frac{1}{2}\ln\left(\frac{-e^{-L}+\sqrt{e^{-2L}+8e^{-L}}}{2}\right)
\end{equation}
and
\begin{equation}\label{optpoint2}
x_2=b+\frac{1}{2}\ln\left(\frac{-e^{-L}+\sqrt{e^{-2L}+8e^{-L}}}{2}\right).
\end{equation}
\end{example}

Although measuring the reconstruction error by the maximum norm in $C(\Omega)$ seems the most natural and the maximum norm dominates other $L^p$ norms, finding the extrema of a multivariate function is always difficult. A Hilbert space norm can often save computation efforts. From this consideration, we restrict ourself to the choice $L^2_\mu(\Omega)$ in the rest of the paper. In the case when $\Omega:=\{y_k:1\le k\le m\}\subseteq X$ with $m\gg n$ and $\mu(\{y_k\})=1/m$ for $1\le k\le m$, the $n$-dimensional subspace $\cS_0$ that minimizes
$$
\inf\left\{\frac1m\sum_{k=1}^m\dist^2(K(y_k,\cdot),\cS):\ \cS\mbox{ is an }n\mbox{-dimensional subspace of }\cH_K\right\}
$$
is given by the Karhunen-Lo\`{e}ve transform. More specifically, $\cS_0$ is spanned by the eigenfunctions corresponding to the largest $n$ eigenvalues of the compact positive bounded linear operator $T$ on $\cH_K$ given by
$$
T(f):=\frac1m\sum_{k=1}^m f(y_k)K(y_k,\cdot).
$$
The process of computing the eigenfunctions and eigenvalues of this operator is also known as kernel principal component analysis in machine learning \cite{Scholkopf2001a}.
Of course, the story is not over yet as the space we are looking for should be of the form (\ref{kernelspace}).
Our idea is to find sampling points $\cX$ for which $\cS_\cX$ best approximates the subspace spanned by the first $n$ eigenfunctions of $T$. Before we estimate the distance between these two subspaces of $\cH_K$, we first show that for general reconstruction error, the minimization problem
\begin{equation}\label{KLproblem}
\min\left\{\int_\Omega\dist^2(K(x,\cdot),\cS)d\mu(x):\ \cS\mbox{ is an }n\mbox{-dimensional subspace of }\cH_K\right\}
\end{equation}
can still be reduced to computing the first $n$ eigenfunctions of a compact positive bounded linear operator on $\cH_K$. We shall prove such a Karhunen-Lo\`{e}ve transform exists for general measure $\mu$.

\section{A general Karhunen-Lo\`{e}ve transform}
\setcounter{equation}{0}

The purpose of this section is to show that the subspace that minimizes (\ref{KLproblem}) is spanned by the first $n$ eigenfunctions of a compact positive bounded linear operator. We shall prove this result under a very general setting.

Let $\cH$ be an infinite-dimensional separable Hilbert space, $(\Omega,\cM,\mu)$ be a measure space, that is, $\cM$ is a $\sigma$-algebra consisting of certain subsets of $\Omega$ and $\mu$ is a finite positive measure on $\cM$. We assume that there is a function $F:\Omega\to \cH$ such that for each $u\in\cH$, the function
$$
\omega\to (F(\omega),u)_{\cH}
$$
is measurable with respect to $\cM$ and such that $\|F(\cdot)\|_{\cH}\in L^2_\mu(\Omega,\cM)$. For a fixed $n\in\bN$, we want to find an $n$-dimensional subspace $V$ of $\cH$ that approximates $F(\Omega)$ well. By measuring the approximation of each candidate subspace $V$ as
$$
\cE(V):=\int_\Omega \dist^2(F(\omega),V)d\mu(\omega),
$$
the optimal approximating subspace $\cS_n$ is the one that minimizes the above error among all $n$-dimensional subspaces of $\cH$. A Karhunen-Lo\`{e}ve transform for this general question is presented below.

\begin{thm}\label{KL}
The operator $T:\cH\to\cH$ determined by
\begin{equation}\label{KLoperator}
(Tu,v)_{\cH}=\int_\Omega (u,F(\omega))_\cH(F(\omega),v)_\cH d\mu(\omega),\ u,v\in \cH,
\end{equation}
is compact positive bounded linear. The optimal $n$-dimensional subspace $\cS_n$ that satisfies
$$
\cE(\cS_n)=\inf\{\cE(V):\ V\mbox{ is an }n\mbox{-dimensional subspace of }\cH\}
$$
is given by $\cS_n=\span\{e_j:1\le j\le n\}$, where $e_j$'s are the orthonormal eigenfunctions corresponding to the largest $n$ eigenvalues of $T$.
\end{thm}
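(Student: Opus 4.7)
The plan splits naturally into two tasks: first showing that $T$ is a well-defined compact positive bounded linear operator, and then reducing the minimization of $\cE(V)$ to an eigenvalue problem for $T$.

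For the first task, I would define $T$ via the sesquilinear form on the right-hand side of (\ref{KLoperator}). The Cauchy-Schwarz inequality gives
$$|(u,F(\omega))_\cH (F(\omega),v)_\cH|\le \|u\|_\cH \|v\|_\cH \|F(\omega)\|_\cH^2,$$
so the assumption $\|F(\cdot)\|_\cH\in L^2_\mu(\Omega,\cM)$ makes the integrand in (\ref{KLoperator}) integrable and the form bounded by $\|u\|_\cH\|v\|_\cH \int_\Omega \|F(\omega)\|_\cH^2 d\mu(\omega)$. The Riesz representation theorem then yields a unique bounded linear $T$ satisfying (\ref{KLoperator}). Positivity (and hence self-adjointness) is immediate from $(Tu,u)_\cH=\int_\Omega|(u,F(\omega))_\cH|^2 d\mu(\omega)\ge 0$.

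For compactness, my preferred route is to establish that $T$ is trace class. Fix any orthonormal basis $\{\phi_k\}$ of $\cH$. Since the summands are nonnegative, Tonelli's theorem and Parseval's identity give
$$\sum_k (T\phi_k,\phi_k)_\cH = \int_\Omega \sum_k |(F(\omega),\phi_k)_\cH|^2 d\mu(\omega) = \int_\Omega \|F(\omega)\|_\cH^2 d\mu(\omega)<+\infty,$$
so $T$ is trace class and in particular compact. A fallback, should trace-class machinery be unwelcome, is to approximate $F$ in $L^2_\mu$ by simple $\cH$-valued functions $F_j=\sum_i v_i\chi_{A_i}$; the associated operators $T_j u = \sum_i\mu(A_i)(u,v_i)_\cH v_i$ are finite rank, and a direct Cauchy-Schwarz estimate shows $\|T-T_j\|\to 0$.

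For the second task, the key identity is that for any orthonormal basis $\{v_1,\ldots,v_n\}$ of an $n$-dimensional subspace $V$,
$$\dist^2(F(\omega),V)=\|F(\omega)\|_\cH^2-\sum_{j=1}^n|(F(\omega),v_j)_\cH|^2,$$
which upon integrating against $\mu$ yields
$$\cE(V)=\int_\Omega \|F(\omega)\|_\cH^2 d\mu(\omega)-\sum_{j=1}^n (Tv_j,v_j)_\cH.$$
Thus minimizing $\cE(V)$ is equivalent to maximizing $\sum_{j=1}^n(Tv_j,v_j)_\cH$ over orthonormal $n$-tuples in $\cH$. Applying Ky Fan's maximum principle for compact positive self-adjoint operators, this maximum equals $\lambda_1+\cdots+\lambda_n$ and is attained by $v_j=e_j$, identifying $\cS_n$ as claimed. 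The main obstacle I anticipate is compactness of $T$; the trace-class identity above relies essentially on the interchange of sum and integral for nonnegative terms together with the hypothesis that $\|F(\cdot)\|_\cH$ is square-integrable, while the remaining ingredients (Riesz representation, the Pythagorean formula for the distance to a subspace, and Ky Fan's inequality) are routine once $T$ has been placed in the compact self-adjoint category.
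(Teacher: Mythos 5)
Your proposal is correct, and apart from the compactness step it follows the same route as the paper: bound the sesquilinear form by Cauchy--Schwarz, invoke the Riesz representation theorem to obtain a bounded linear $T$, read off positivity from $(Tu,u)_\cH=\int_\Omega|(u,F(\omega))_\cH|^2d\mu(\omega)\ge0$, and reduce the minimization of $\cE(V)$ to maximizing $\sum_{j=1}^n(Tv_j,v_j)_\cH$ over orthonormal $n$-tuples via the Pythagorean identity; the paper, like you, then settles that finite-dimensional extremal problem by appeal to the standard Karhunen--Lo\`{e}ve (Ky Fan) argument. The one genuine divergence is how compactness is proved. The paper argues directly: from a bounded sequence $u_j$ it extracts a weakly convergent subsequence $Tu_j\rightharpoonup u_0$ and upgrades this to strong convergence by writing $(Tu_j-u_0,Tu_j)_\cH$ as an integral and applying dominated convergence with the dominating function $(\|T\|\|u_j\|_\cH^2+\|u_0\|_\cH\|u_j\|_\cH)\|F(\cdot)\|_\cH^2$. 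Your computation $\sum_k(T\phi_k,\phi_k)_\cH=\int_\Omega\|F(\omega)\|_\cH^2\,d\mu(\omega)=C_F<+\infty$ (Tonelli plus Parseval, using separability of $\cH$ and the weak measurability hypothesis on $F$) is shorter and strictly stronger: it shows $T$ is trace class with $\tr(T)=C_F$, not merely compact. The paper's argument buys elementarity, avoiding trace-class machinery altogether. Your finite-rank fallback is also viable, with the small caveat that approximating $F$ by simple $\cH$-valued functions uses strong (Bochner) measurability, which here follows from weak measurability and separability via Pettis's theorem.
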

\begin{proof}
Let $v\in\cH$ be fixed. Then for each $u\in\cH$, we observe that
$$
\begin{array}{ll}
\displaystyle{\left|\int_\Omega (u,F(\omega))_\cH(F(\omega),v)_\cH d\mu(\omega)\right|}&\displaystyle{\le
\int_\Omega\left|(u,F(\omega))_\cH\right|\left|(F(\omega),v)_\cH\right|d\mu(\omega)}\\
&\displaystyle{\le
 C_F\,\|u\|_\cH\|v\|_\cH},
\end{array}
$$
where
$$
C_F:=\int_\Omega \|F(\omega)\|_\cH^2d\mu(\omega).
$$
It implies that
$$
u\to \int_\Omega (u,F(\omega))_\cH(F(\omega),v)_\cH d\mu(\omega)
$$
is a bounded linear functional on $\cH$. By the Riesz representation theorem, there exists a unique vector $w_v$ associated with $v$ such that
$$
\int_\Omega (u,F(\omega))_\cH(F(\omega),v)_\cH d\mu(\omega)=(u,w_v)_\cH.
$$
We denote the mapping sending $v$ to $w_v$ by $T$. It is clear that this operator is linear. Moreover, we have
$$
|(u,Tv)_\cH|=\left|\int_\Omega (u,F(\omega))_\cH(F(\omega),v)_\cH d\mu(\omega)\right|\le C_F\|u\|_\cH\|v\|_\cH.
$$
Therefore, $\|Tv\|_\cH\le C_F\|v\|_\cH$, implying that $T$ is bounded. We also see that for all $u\in\cH$
$$
(Tu,u)_\cH=\int_\Omega |(u,F(\omega))_\cH|^2 d\mu(\omega)\ge0.
$$
Thus, $T$ is positive.

We next show that $T$ is compact. To this end, let $u_j$ be a bounded sequence in $\cH$. Then $Tu_j$ is bounded as well. As $\cH$ is reflexive, its unit ball is weakly compact. We may hence assume that $Tu_j$ converges weakly to some $u_0$ in $\cH$. In other words,
$$
\lim_{j\to\infty} (Tu_j,v)_\cH=(u_0,v)_\cH\mbox{ for all }v\in\cH.
$$
We shall prove that $Tu_j$ converges to $u_0$ strongly in $\cH$. Note that
$$
(Tu_j-u_0,Tu_j-u_0)_\cH=(Tu_j-u_0,Tu_j)_\cH-(Tu_j-u_0,u_0)_\cH.
$$
As $(Tu_j-u_0,u_0)_\cH\to0$ as $j\to\infty$, it suffices to show that
$$
\lim_{j\to\infty}(Tu_j-u_0,Tu_j)_\cH=0.
$$
We observe from the definition of $T$ that
$$
(Tu_j-u_0,Tu_j)_\cH=\int_\Omega ((Tu_j,F(\omega))_\cH-(u_0,F(\omega))_\cH)(F(\omega),u_j)_\cH d\mu(\omega).
$$
For each $\omega\in\Omega$, $(Tu_j,F(\omega))_\cH\to (u_0,F(\omega))_\cH$ as $Tu_j$ converges weakly to $u_0$. As a result, there holds
\begin{eqnarray*}
&&\lim_{j\to\infty}|((Tu_j,F(\omega))_\cH-(u_0,F(\omega))_\cH)(F(\omega),u_j)_\cH|\\
&\le& \|F(\omega)\|_\cH\sup_{j}\|u_j\|_\cH\lim_{j\to\infty}|(Tu_j,F(\omega))_\cH-(u_0,F(\omega))_\cH|=0.
\end{eqnarray*}
Furthermore,
$$
|((Tu_j,F(\cdot))_\cH-(u_0,F(\omega))_\cH)(F(\cdot),u_j)_\cH|\le (\|T\|\|u_j\|_\cH^2+\|u_0\|_\cH\|u_j\|_\cH)\|F(\cdot)\|_\cH^2\in L^1_\mu(\Omega,\cM).
$$
The above equations together imply by the Lebesgue dominated convergence theorem that
$$
\lim_{j\to\infty}(Tu_j-u_0,Tu_j)_\cH=\lim_{j\to\infty}\int_\Omega ((Tu_j,F(\omega))_\cH-(u_0,F(\omega))_\cH)(F(\omega),u_j)_\cH d\mu(\omega)=0.
$$
Therefore, $\|Tu_j-u_0\|_\cH\to0$ as $j\to\infty$. We have hence proved that $T$ is a positive compact bounded linear operator on $\cH$.

Turning to the last claim of the theorem, we let $V$ be an $n$-dimensional subspace of $\cH$ with the orthonormal basis $f_j$, $1\le j\le n$. Then
$$
\cE(V)=\int_\Omega \|F(\omega)\|_\cH^2-\sum_{j=1}^n |(F(\omega),f_j)_\cH|^2d\mu(\omega)=\int_\Omega\|F(\omega)\|_\cH^2d\mu(\omega)-\sum_{j=1}^n(Tf_j,f_j)_\cH.
$$
Thus, the question amounts to finding an orthonormal sequence $\{f_j:1\le j\le n\}$ in $\cH$ that maximizes the sum
$$
\sum_{j=1}^n (Tf_j,f_j)_\cH.
$$
The analysis of this last part is the same as that for the standard Karhunen-Lo\`{e}ve transform, that is, the optimal sequence is achieved by the orthonormal eigenfunctions corresponding to the largest $n$ eigenvalues of $T$.
\end{proof}

Returning to the sampling, we specify $\Omega$ to be a compact subset of the input space $X$, $\mu$ to be a finite positive Borel measure on $X$, $K$ to be a continuous kernel on $X$, and
$$
F(t):=K(t,\cdot),\ \ t\in\Omega.
$$
By Theorem \ref{KL}, the bounded linear operator $T$ from $\cH_K$ to $\cH_K$ determined by
\begin{equation*}\label{operatorTK1}
(Tf,g)_{\cH_K}=\int_\Omega f(t)\overline{g(t)}d\mu(t),\ \ f,g\in\cH_K
\end{equation*}
is positive and compact. It is of the explicit form
\begin{equation}\label{operatorTK1}
(Tf)(x)=\int_\Omega f(t)K(t,x)d\mu(t),\ \ x\in X,\ f\in\cH_K.
\end{equation}
For each $n$-dimensional subspace $V$ of $\cH_K$ with the orthonormal basis $\{u_j:1\le j\le n\}$,
$$
\cE(V)=\int_\Omega\dist^2(K(t,\cdot),V)d\mu(t)=\int_\Omega K(t,t)d\mu(t)-\sum_{j=1}^n (Tu_j,u_j)_{\cH_K}.
$$
An orthonormal basis for $\cS_\cX=\span\{K(x_j,\cdot):1\le j\le n\}$ is given by (\ref{onb}). Thus,
$$
\cE(V)=\int_\Omega K(t,t)d\mu(t)-\sum_{j=1}^n\sum_{k=1}^n\sum_{l=1}^n\alpha_{jk}\alpha_{lj}(T(K(x_k,\cdot)),K(x_l,\cdot))_{\cH_K}.
$$
Setting
$$
\bK_{k,l}:=(T(K(x_k,\cdot)),K(x_l,\cdot))_{\cH_K}=\int_\Omega K(x_k,t)K(t,x_l)d\mu(t),\ \ 1\le k,l\le n,
$$
we conclude that the optimal sampling set $\cX$ is the solution of
\begin{equation}\label{algorithm1}
\max_{\cX\in X^n}\sum_{j,k,l=1}^n\alpha_{jk}\alpha_{lj}\bK_{k,l}=\max_{\cX\in X^n}\mbox{tr}\left((K[\cX])^{-1/2}\bK (K[\cX])^{-1/2}\right)=\max_{\cX\in X^n}\mbox{tr}\left(\bK^{1/2} (K[\cX])^{-1}\bK^{1/2}\right),
\end{equation}
where $\tr(M)$ stands for the trace of a square matrix $M$. When the eigenfunctions and eigenvalues of the operator $T$ is known, one has a different formulation of the above optimization problem. Let $e_i$, $i\in\bI$ be all the orthonormal eigenfunctions of $T$ with a positive eigenvalue $\lambda_i$. We see for all $1\le k,l\le n$ that
\begin{eqnarray*}
\bK_{k,l}=(T(K(x_k,\cdot)),K(x_l,\cdot))_{\cH_K}&=&\sum_{i,i'\in\bI}(K(x_k,\cdot),e_i)_{\cH_K}(e_{i'},K(x_l,\cdot))_{\cH_K}(Te_i,e_{i'})_{\cH_K}\\
&=&\sum_{i\in\bI}\lambda_ie_i(x_l)e_i(x_k).
\end{eqnarray*}

Practically, we are most concerned with the case when $\Omega$ has finite cardinality that is considerably larger than $n$. In this situation, $T$ has finite rank. Assume that $\bI=\{1,2,\ldots,m\}$ and set
$$
\Lambda_{il}:=\delta_{i,l}\sqrt{\lambda_i},\ D_{ki}:=e_i(x_k),\ \ 1\le i,l\le m,\ \ 1\le k\le n.
$$
With these notations, $\bK=(D\Lambda)(D\Lambda)^T$. When $T$ is of finite rank, this together with the fact that for a square matrix $A$, $\tr(AA^T)=\tr(A^TA)$ yields an equivalent formulation of (\ref{algorithm1})
\begin{equation}\label{algorithm2}
\max_{\cX\in X^n}\tr\left(\Lambda D^T(K[\cX])^{-1}D\Lambda\right).
\end{equation}
Computing all the eigenfunctions and eigenvalues of $T$ can be costly when $m$ is large. Instead of attacking (\ref{algorithm1}) or (\ref{algorithm2}) directly, we shall relax (\ref{algorithm2}) to use only the $n$ eigenfunctions of $T$ corresponding to the first $n$ largest eigenvalues of $T$, which can often be obtained efficiently by the standard Karhunen-Lo\`{e}ve algorithm. Following the idea described at the end of Section 2, we shall achieve this by estimating the distance between $\cS_\cX$ and the one spanned by the first $n$ eigenfunctions of $T$.

\section{Subspace approximation}
\setcounter{equation}{0}

We now let $e_j$, $1\le j\le n$ be the orthonormal eigenfunctions of $T$, defined as in (\ref{operatorTK1}), corresponding to the largest $n$ eigenvalues of $T$. We assume that these eigenvalues are positive. By Theorem \ref{KL}, the subspace $\cS_T:=\span\{e_i:1\le i\le n\}$ is a minimizer of optimization problem (\ref{KLproblem}). We wish to find sampling points $\cX$ such that $\cE(\cS_\cX)-\cE(\cS_T)$ is small, where for a closed subspace $V$ of $\cH_K$,
$$
\cE(V)=\int_\Omega \dist^2(K(t,\cdot),V)d\mu(t).
$$
To this end, we first observe that for any closed subspaces $U$ and $V$ of $\cH_K$, $|\cE(U)-\cE(V)|$ can be bounded by the subspace distance between $U$ and $V$.

Denote by $P_V$ the orthogonal projection operator from $\cH_K$ onto $V$. The distance between two closed subspaces $U$ and $V$ of $\cH_K$ is defined by
$$
\dist(U,V):=\|P_U-P_V\|,
$$
where $\|P_U-P_V\|$ is the operator norm of $P_U-P_V$, that is,
$$
\|P_U-P_V\|=\sup_{f\in\cH_K}\frac{\|P_U(f)-P_V(f)\|_{\cH_K}}{\|f\|_{\cH_K}}.
$$
Apparently, the above supremum can be restricted to the closed subspace spanned by the union of $U$ and $V$.

\begin{lemma}\label{boundbydist}
It holds for any two closed subspaces $U$ and $V$ of $\cH_K$ that
\begin{equation}\label{boundbydisteq}
|\cE(U)-\cE(V)|\le 2K_\Omega\dist(U,V),
\end{equation}
where
$$
K_\Omega:=\int_\Omega K(t,t)d\mu(t).
$$
\end{lemma}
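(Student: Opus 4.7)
The plan is to reduce the bound on $|\cE(U)-\cE(V)|$ to a pointwise bound on the integrand $\dist^2(K(t,\cdot),U) - \dist^2(K(t,\cdot),V)$, and then to replace squared distances by squared norms of projections so that the operator norm $\|P_U - P_V\|$ can be brought in directly.

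First, I would rewrite $\cE(V)$ using the orthogonal projection identity $\dist^2(K(t,\cdot),V) = \|K(t,\cdot)\|_{\cH_K}^2 - \|P_V K(t,\cdot)\|_{\cH_K}^2$, and similarly for $U$. Since $\|K(t,\cdot)\|_{\cH_K}^2 = K(t,t)$, the common term $\int_\Omega K(t,t)d\mu(t) = K_\Omega$ cancels when we take the difference, giving
\[
\cE(U) - \cE(V) \;=\; \int_\Omega \bigl(\|P_V K(t,\cdot)\|_{\cH_K}^2 - \|P_U K(t,\cdot)\|_{\cH_K}^2\bigr)\,d\mu(t).
\]
At this stage the integrability of the integrand follows from $K(t,t)$ being continuous on the compact set $\Omega$ (or at least $\mu$-integrable under the standing hypotheses).

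Next, for each fixed $t$ I would factor the difference of squares as
\[
\|P_V K(t,\cdot)\|_{\cH_K}^2 - \|P_U K(t,\cdot)\|_{\cH_K}^2 \;=\; \bigl(\|P_V K(t,\cdot)\|_{\cH_K} - \|P_U K(t,\cdot)\|_{\cH_K}\bigr)\bigl(\|P_V K(t,\cdot)\|_{\cH_K} + \|P_U K(t,\cdot)\|_{\cH_K}\bigr).
\]
The second factor is at most $2\|K(t,\cdot)\|_{\cH_K} = 2\sqrt{K(t,t)}$ because projections are contractions. For the first factor, the reverse triangle inequality combined with the definition of $\dist(U,V) = \|P_U - P_V\|$ gives
\[
\bigl|\|P_V K(t,\cdot)\|_{\cH_K} - \|P_U K(t,\cdot)\|_{\cH_K}\bigr| \;\le\; \|(P_U - P_V)K(t,\cdot)\|_{\cH_K} \;\le\; \dist(U,V)\,\sqrt{K(t,t)}.
\]
Multiplying these two bounds produces the pointwise estimate
\[
\bigl|\|P_V K(t,\cdot)\|_{\cH_K}^2 - \|P_U K(t,\cdot)\|_{\cH_K}^2\bigr| \;\le\; 2\,K(t,t)\,\dist(U,V).
\]

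Finally, I would take absolute values inside the integral and integrate this pointwise bound over $\Omega$ with respect to $\mu$, which yields exactly $|\cE(U)-\cE(V)| \le 2\,K_\Omega\,\dist(U,V)$. I do not anticipate a real obstacle here: the only subtlety is being careful to keep the projections on the \emph{correct} vector $K(t,\cdot)$ when invoking the operator norm $\|P_U - P_V\|$, and remembering that the supremum in the definition of $\dist(U,V)$ applies to every element of $\cH_K$, in particular to the reproducing kernel sections.
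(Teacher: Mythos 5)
Your proposal is correct and is essentially the paper's own argument: both reduce to the pointwise estimate $\bigl|\dist^2(K(t,\cdot),U)-\dist^2(K(t,\cdot),V)\bigr|\le 2K(t,t)\dist(U,V)$ via a difference-of-squares factorization, the reverse triangle inequality, and the contraction property of projections, then integrate over $\Omega$. The only cosmetic difference is that you cancel the common term $K_\Omega$ and work with $P_U, P_V$ while the paper works with $I-P_U, I-P_V$; since $(I-P_U)-(I-P_V)=P_V-P_U$, the estimates are identical.
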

\begin{proof}
Denote by $I$ the identity operator. We estimate that
\begin{eqnarray*}
&&\left|\dist^2(K(t,\cdot),U)-\dist^2(K(t,\cdot),V)\right|\\
&=&\left|\|(I-P_U)K(t,\cdot)\|_{\cH_K}^2-\|
(I-P_V)K(t,\cdot)\|_{\cH_K}^2\right|\\
&\le&\|(I-P_U)K(t,\cdot)-(I-P_V)K(t,\cdot)\|_{\cH_K}(\|(I-P_U)
K(t,\cdot)\|_{\cH_K}+\|(I-P_V)K(t,\cdot)\|_{\cH_K})\\
&\le& \|P_U-P_V\|\|K(t,\cdot)\|_{\cH_K}(\|K(t,\cdot)\|_{\cH_K}+\|K(t,\cdot)\|_{\cH_K})\\
&=&2K(t,t)\dist(U,V),
\end{eqnarray*}
from which (\ref{boundbydisteq}) follows.
\end{proof}

According to the above lemma, we face to figure out the distance between subspaces $\cS_\cX$ and $\cS_T$. To this end,
we introduce some notations. Set
$$
f_j:=\sum_{k=1}^n (K(x_j,\cdot),e_k)_{\mathcal{H}_K}e_k=\sum_{k=1}^n \overline{e_k(x_j)}e_k,\ \ 1\le j\le n.
$$
In other words, $f_j$ is the orthogonal projection of $K(x_j,\cdot)$ onto $\cS_T$. Also, set
$$
h_j:=K(x_j,\cdot)-f_j,\ \ 1\le j\le n.
$$
Accordingly, we define two positive definite matrices by letting
$$
\mathbf{A}:=[(f_k,f_j)_{\cH_K}:1\le j,k\le n]\ \mbox{and}\ \mathbf{B}:=[(h_k,h_j)_{\cH_K}:1\le j,k\le n]
$$
We shall assume that $\mathbf{A}$ and $\mathbf{B}$ are both nonsingular. It will be shown in the proof below that ${\bf A}+{\bf B}=K[\cX]^T$. We assume in this section that $K[\cX]$ is nonsingular as well.
\begin{lemma}\label{subspacedistlemma}
If the matrix $\mathbf{E}:=[e_k(x_j):1\le j,k\le n]$ is nonsingular then
\begin{equation}\label{subspacedisteq}
\dist(\cS_\cX,\cS_T)=\sqrt{1-\frac{1}{\lambda_{\mbox{max}}(K[\mathcal{X}]^T(\mathbf{E}\mathbf{E}^{*})^{-1})}}.
\end{equation}
where $\lambda_{\mbox{max}}(M)$ denotes the largest eigenvalue of a square matrix $M$. If $\mathbf{E}$ is singular then
$\dist(\cS_\cX,\cS_T)\geq1$.
\end{lemma}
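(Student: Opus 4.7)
The plan is to invoke the standard principal-angles formula for two subspaces of equal dimension and translate it into a generalized Rayleigh quotient involving $\mathbf{A}$ and $K[\cX]^T$. Recall that if $U$ and $V$ are $n$-dimensional subspaces of a Hilbert space, then $\|P_U-P_V\|^2=\sin^2\theta_{\max}=1-\cos^2\theta_{\max}$, where $\cos^2\theta_{\max}$ is the smallest squared singular value of $P_V|_U:U\to V$ and equals $\inf_{u\in U,\,\|u\|=1}\|P_V u\|_{\cH_K}^2$. A preliminary observation is that $\mathbf{A}=\mathbf{E}\mathbf{E}^*$: expanding $f_j=\sum_k\overline{e_k(x_j)}\,e_k$ and using orthonormality of the $e_k$ yields $(f_k,f_j)_{\cH_K}=\sum_l e_l(x_j)\overline{e_l(x_k)}=(\mathbf{E}\mathbf{E}^*)_{j,k}$. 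Consequently, $\mathbf{E}$ is nonsingular iff $\mathbf{A}$ is nonsingular iff $\{f_1,\dots,f_n\}$ is a basis of $\cS_T$, in which case $\dim\cS_\cX=\dim\cS_T=n$ and the principal-angles formula applies.

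For the nonsingular case, I parameterize an arbitrary $u\in\cS_\cX$ as $u=\sum_j c_j K(x_j,\cdot)$. The orthogonal decomposition $K(x_j,\cdot)=f_j+h_j$ with $f_j\in\cS_T$ and $h_j\in\cS_T^\perp$ gives $P_{\cS_T}u=\sum_j c_j f_j$, so that (using the identity $\mathbf{A}+\mathbf{B}=K[\cX]^T$ for the consistency check)
$$
\|u\|_{\cH_K}^2=c^*K[\cX]^T c,\qquad \|P_{\cS_T}u\|_{\cH_K}^2=c^*\mathbf{A}c.
$$
Hence
$$
\cos^2\theta_{\max}=\inf_{c\neq 0}\frac{c^*\mathbf{A}c}{c^*K[\cX]^T c}=\lambda_{\min}\!\left((K[\cX]^T)^{-1/2}\mathbf{A}(K[\cX]^T)^{-1/2}\right),
$$
which, after inverting this Hermitian positive-definite matrix and using that $AB$ and $BA$ share the same nonzero spectrum, equals $1/\lambda_{\max}(K[\cX]^T\mathbf{A}^{-1})$. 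Substituting into $\dist^2=1-\cos^2\theta_{\max}$ yields (\ref{subspacedisteq}).

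For the singular case, $\{f_j\}_{j=1}^n$ spans a proper subspace of $\cS_T$, so I can pick a unit vector $v\in\cS_T$ with $(v,f_j)_{\cH_K}=0$ for every $j$. Since $v\in\cS_T$ and $h_j=K(x_j,\cdot)-f_j\in\cS_T^\perp$, this forces $(v,K(x_j,\cdot))_{\cH_K}=(v,f_j)_{\cH_K}+(v,h_j)_{\cH_K}=0$ for every $j$, i.e. $v\perp\cS_\cX$. Therefore $P_{\cS_\cX}v=0$ while $P_{\cS_T}v=v$, giving $\|(P_{\cS_T}-P_{\cS_\cX})v\|_{\cH_K}=1$ and hence $\dist(\cS_\cX,\cS_T)\ge 1$.

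The main obstacle is cleanly invoking the principal-angles identity (which needs equal dimension, and this is precisely what nonsingularity of $\mathbf{E}$ guarantees) and keeping the Hermitian/transpose bookkeeping straight so that the Rayleigh quotient ratio produces $K[\cX]^T$ rather than $K[\cX]$ in the denominator, matching the statement. Everything else is routine linear algebra on positive definite matrices.
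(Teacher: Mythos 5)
Your proof is correct, and it reaches the paper's formula by a genuinely shorter route. The paper does not invoke the principal-angle identity; it proves the norm computation from scratch: it writes $\cS_\cX=\span\{f_j+h_j\}$, solves explicitly for the orthogonal projection onto $\cS_\cX$ of a general element $u+v$ with $u\in\cS_T$, $v\in\span\{h_j\}$, and reduces $\dist^2(\cS_\cX,\cS_T)$ to the spectral norm of a $2\times2$ block-diagonal matrix $\mathbf{M}$ built from $\mathbf{A}$ and $\mathbf{B}:=[(h_k,h_j)_{\cH_K}]$, showing both blocks have largest eigenvalue $\lambda_{\max}(\mathbf{B}(\mathbf{A}+\mathbf{B})^{-1})$ and then using $\mathbf{A}+\mathbf{B}=K[\cX]^T$. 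What you take as a known fact --- that for equal-dimensional subspaces $\|P_U-P_V\|^2=1-\inf_{u\in U,\|u\|=1}\|P_Vu\|^2$, i.e.\ the symmetry of the gap --- is exactly what the paper's equality of the two blocks of $\mathbf{M}$ establishes by hand; so your argument leans on a standard cited identity where the paper is self-contained. In exchange, your route buys two things: the whole nonsingular case collapses to one generalized Rayleigh quotient $\inf_c\,c^*\mathbf{A}c/(c^*K[\cX]^Tc)$, and you never need the paper's extra standing assumption that $\mathbf{B}$ is nonsingular (the paper uses it to make the substitution $\mathbf{b}=\mathbf{B}^{1/2}\mathbf{v}$ surjective). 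Your treatment of the singular case is the mirror image of the paper's (you exhibit a unit vector of $\cS_T$ orthogonal to $\cS_\cX$, the paper a nonzero vector of $\cS_\cX$ orthogonal to $\cS_T$); both are valid. One cosmetic point: $\dim\cS_\cX=n$ follows from the standing nonsingularity of $K[\cX]$, not from that of $\mathbf{E}$; nonsingularity of $\mathbf{E}$ is what makes $\cos\theta_{\max}>0$ and hence the right-hand side of (\ref{subspacedisteq}) meaningful.
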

\begin{proof}
If $\mathbf{E}$ is singular then there exists a nonzero function in $\cS_\cX$ that is orthogonal to $\cS_T$.
It follows immediately that $\dist(\cS_\cX,\cS_T)\geq1$.

Suppose that $\mathbf{E}$ is nonsingular. By the nonsingularity of $\mathbf{E}$, $\cS_T$ is identical with the following subspace of $\cH_K$:
$$
U:=\span \{f_j:\ 1\le j\le n\}.
$$
The space $\cS_\cX$ coincides with $\widetilde{U}:=\span\{f_j+h_j:1\le j\le n\}$. For later use, we also introduce another two subspaces of $\cH_K$:
$$
V:=\span \{h_j:1\le j\le n\}\ \mbox{and}\ W:=\span\{f_j,h_j:1\le j\le n\}.
$$
We first observe that
$$
\dist(U,\widetilde{U})=\sup_{w\in W}\frac{\|P_U(w)-P_{\widetilde{U}}(w)\|_{\cH_K}}{\|w\|_{\cH_K}}.
$$
Any $w\in W$ can be represented as $w=u+v$, where $u\in U$ and $v\in V$. By definition, we have $(f_j,h_k)_{\mathcal{H}_K}=0$ for all $1\le j,k\le n$,
which yields that $U$ is orthogonal to $V$. We get that
\begin{eqnarray}\label{dist}
{\rm dist}^2(U,\widetilde{U})&=&\sup_{u\in U, v\in V}\frac{\|(P_{U}-P_{\widetilde{U}})(u+v)\|^2_{\cH_K}}{\|u+v\|^2_{\cH_K}}\nonumber\\
&=&\sup_{u\in U, v\in V}\frac{\|u-P_{\widetilde{U}}(u+v)\|^2_{\cH_K}}{\|u\|^2_{\cH_K}+\|v\|^2_{\cH_K}}.
\end{eqnarray}
To estimate (\ref{dist}), we first give $P_{\widetilde{U}}(u+v)$ explicitly. To this end, we assume that
\begin{equation}\label{projection}
P_{\widetilde{U}}(u+v)=\sum_{k=1}^nc_k(f_k+h_k)
\end{equation}
for some $c_j\in\bC$. By the characterization of orthogonal projections, we get  the equations
$$
(u+v-\sum_{k=1}^nc_k(f_k+h_k), f_j+h_j)_{\cH_K}=0, \ \ {1\le j\le n},
$$
which leads to
\begin{equation}\label{cequation}
\sum_{k=1}^nc_k((f_k,f_j)_{\cH_K}+(h_k,h_j)_{\cH_K})=(u,f_j)_{\cH_K}+(v,h_j)_{\cH_K},\ \  {1\le j\le n}.
\end{equation}
Set $\mathbf{c}:=[c_k:{1\le k\le n}]^T$. For each
$$
u=\sum_{k=1}^nu_kf_k\in U \ \mbox{and} \ v=\sum_{k=1}^n v_kh_k\in V,
$$
we set $ \mathbf{u}:=[u_k:{1\le k\le n}]^T$ and $\mathbf{v}:=[v_k:{1\le k\le n}]^T$. Then equation (\ref{cequation}) can be rewritten in a matrix form
$$
(\mathbf{A}+\mathbf{B})\mathbf{c}=\mathbf{A}\mathbf{u}+\mathbf{B}\mathbf{v}.
$$
Thus we obtain
\begin{equation}\label{c}
\mathbf{c}=(\mathbf{A}+\mathbf{B})^{-1}(\mathbf{A}\mathbf{u}+\mathbf{B}{\bf v}).
\end{equation}
By (\ref{projection}),  we have
\begin{eqnarray*}
\|u-{P}_{\widetilde{U}}(u+v)\|^2_{\cH_K}&=&\|u-\sum_{k=1}^nc_k(f_k+h_k)\|^2_{\cH_K}\\
&=&\|u-\sum_{k=1}^nc_k f_k\|^2_{\cH_K}+\|\sum_{k=1}^nc_k h_k\|^2_{\cH_K}\\
&=&\|\sum_{k=1}^n(u_k-c_k)f_k\|^2_{\cH_K}+\|\sum_{k=1}^nc_k h_k\|^2_{\cH_K}\\
&=&(\mathbf{u}-\mathbf{c})^{*}\mathbf{A}(\mathbf{u}-\mathbf{c})+\mathbf{c}^{*}\mathbf{B}\mathbf{c}.
\end{eqnarray*}
Substituting (\ref{c}) into the above equation, we get that
\begin{eqnarray*}
\|u-{P}_{\widetilde{U}}(u+v)\|^2_{\cH_K}&=&\mathbf{u}^{*}\mathbf{A}\mathbf{u}-\mathbf{u}^{*}\mathbf{A}(\mathbf{A}+\mathbf{B})^{-1}(\mathbf{A}\mathbf{u}+\mathbf{B}\mathbf{v})\\
&&-(\mathbf{A}\mathbf{u}+\mathbf{B}\mathbf{v})^{*}(\mathbf{A}+\mathbf{B})^{-1}\mathbf{A}\mathbf{u}\\
&&+(\mathbf{A}\mathbf{u}+\mathbf{B}\mathbf{v})^{*}(\mathbf{A}+\mathbf{B})^{-1}(\mathbf{A}\mathbf{u}+\mathbf{B}\mathbf{v})\\
&=&\mathbf{u}^{*}\mathbf{A}\mathbf{u}-\mathbf{u}^{*}\mathbf{A}(\mathbf{A}+\mathbf{B})^{-1}\mathbf{A}\mathbf{u}+\mathbf{v}^{*}\mathbf{B}(\mathbf{A}+\mathbf{B})^{-1}\mathbf{B}\mathbf{v}.
\end{eqnarray*}
Together with the fact that
$$
\|u\|^2_{\cH_K}=\mathbf{u}^{*}\mathbf{A}\mathbf{u},\ \mbox \ \|v\|^2_{\cH_K}=\mathbf{v}^{*}\mathbf{B}\mathbf{v}
$$
the above equation leads to
\begin{equation}\label{dist1}
{\rm dist}^2(U,\widetilde{U})=\sup_{\mathbf{u}, \mathbf{v}\in\bC^n}\frac{\mathbf{u}^{*}\mathbf{A}\mathbf{u}-\mathbf{u}^{*}\mathbf{A}(\mathbf{A}+\mathbf{B})^{-1}
\mathbf{A}\mathbf{u}+\mathbf{v}^{*}\mathbf{B}(\mathbf{A}+\mathbf{B})^{-1}\mathbf{B}\mathbf{v}}{\mathbf{u}^{*}\mathbf{A}\mathbf{u}+\mathbf{v}^{*}\mathbf{B}\mathbf{v}}.
\end{equation}
Let $\mathbf{a}:=\mathbf{A}^{1/2}\mathbf{u}$ and $\mathbf{b}:=\mathbf{B}^{1/2}\mathbf{v}$. By introducing a matrix
$$
{\bf M}:=\left(\begin{array}{cc}
\mathbf{I}-\mathbf{A}^{1/2}(\mathbf{A}+\mathbf{B})^{-1}\mathbf{A}^{1/2}& 0\\
0&\mathbf{B}^{1/2}(\mathbf{A}+\mathbf{B})^{-1}\mathbf{B}^{1/2}
\end{array}
\right),
$$
we get that
\begin{equation}\label{estimate}
{\rm dist}^2(U,\widetilde{U})=\sup_{\mathbf{a},\mathbf{b}\in\mathbb{C}^n}\frac{\left[\begin{array}{cc}
\mathbf{a}\\
\mathbf{b}
\end{array}
\right]^*\mathbf{M}\left[\begin{array}{cc}
\mathbf{a}\\
\mathbf{b}
\end{array}
\right]}{\left\|\left[\begin{array}{cc}
\mathbf{a}\\
\mathbf{b}
\end{array}
\right]\right\|_2^2}=\|\mathbf{M}\|_2,
\end{equation}
where $\|\cdot\|_2$ denotes the standard Euclidean norm of a vector or the spectral norm of a square matrix. On the one hand, we have
\begin{eqnarray*}
\|\mathbf{I}-\mathbf{A}^{1/2}(\mathbf{A}+\mathbf{B})^{-1}\mathbf{A}^{1/2}\|_2
&=&\|\mathbf{A}^{-1/2}\mathbf{B}(\mathbf{A}+\mathbf{B})^{-1}\mathbf{A}^{1/2}\|_2\nonumber\\
&=&\lambda_{\mbox{max}}(\mathbf{A}^{-1/2}\mathbf{B}(\mathbf{A}+\mathbf{B})^{-1}\mathbf{A}^{1/2}).
\end{eqnarray*}
Since the matrix $\mathbf{A}$ is nonsingular, the matrix $\mathbf{A}^{-1/2}\mathbf{B}(\mathbf{A}+\mathbf{B})^{-1}\mathbf{A}^{1/2}$ has the same eigenvalues
with the matrix $\mathbf{B}(\mathbf{A}+\mathbf{B})^{-1}$. Hence, we have
\begin{equation}\label{estimate1}
\|\mathbf{I}-\mathbf{A}^{1/2}(\mathbf{A}+\mathbf{B})^{-1}\mathbf{A}^{1/2}\|_2=\lambda_{\mbox{max}}(\mathbf{B}(\mathbf{A}+\mathbf{B})^{-1}).
\end{equation}
On the other hand, by the nonsingularity of the matrix $\mathbf{B}$, we also have
\begin{eqnarray}\label{estimate2}
\|\mathbf{B}^{1/2}(\mathbf{A}+\mathbf{B})^{-1}\mathbf{B}^{1/2}\|_2&=&\lambda_{\mbox{max}}
(\mathbf{B}^{1/2}(\mathbf{A}+\mathbf{B})^{-1}\mathbf{B}^{1/2})\nonumber\\
&=&\lambda_{\mbox{max}}(\mathbf{B}(\mathbf{A}+\mathbf{B})^{-1}).
\end{eqnarray}
Combining (\ref{estimate1}) with (\ref{estimate2}), we get that
\begin{equation*}\label{estimate3}
{\rm dist}^2(U,\widetilde{U})=\lambda_{\mbox{max}}(\mathbf{B}(\mathbf{A}+\mathbf{B})^{-1}).
\end{equation*}
For each $1\le j,k\le n$, there holds
\begin{eqnarray*}
(h_k,h_j)_{\cH_K}&=&(K(x_k,\cdot)-f_k,K(x_j,\cdot)-f_j)_{\cH_K}\\
&=&K(x_k,x_j)-\overline{f_j(x_k)}-f_k(x_j)+(f_k,f_j)_{\cH_K}\\
&=&K(x_k,x_j)-\overline{f_j(x_k)},
\end{eqnarray*}
which leads to $\mathbf{B}=K[\mathcal{X}]^T-\mathbf{A}$. Hence, we obtain
$$
{\rm dist}^2(U,\widetilde{U})=\lambda_{\mbox{max}}((K[\mathcal{X}]^T-\mathbf{A})(K[\mathcal{X}]^T)^{-1})=1-
\rho_{\mbox{min}}(\mathbf{A}(K[\mathcal{X}]^T)^{-1})=1-\frac{1}{\lambda_{\mbox{max}}(K[\mathcal{X}]^T\mathbf{A}^{-1})}.
$$
It follows from $\mathbf{A}=\mathbf{E}\mathbf{E}^*$ that there holds (\ref{subspacedisteq}).
\end{proof}

Combining Lemmas \ref{boundbydist} and \ref{subspacedistlemma}, we obtain a bound for the distance between $\cS_\cX$ and the optimal subspace $\cS_T$ and give the last optimization problem for the searching of optimal sampling points.
\begin{thm}\label{subspacedist}
If $\mathbf{E}$ is nonsingular then
$$
\cE(\cS_\cX)-\cE(\cS_T)\le 2K_\Omega\sqrt{1-\frac{1}{\lambda_{\mbox{max}}(K[\mathcal{X}]^T(\mathbf{E}\mathbf{E}^{*})^{-1})}}.
$$
\end{thm}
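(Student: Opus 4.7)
The plan is simply to chain together Lemma \ref{boundbydist} and Lemma \ref{subspacedistlemma}, both of which are already in place. Specifically, I would first apply Lemma \ref{boundbydist} to the two closed subspaces $\cS_\cX$ and $\cS_T$ of $\cH_K$, obtaining
$$
|\cE(\cS_\cX) - \cE(\cS_T)| \le 2K_\Omega \, \dist(\cS_\cX, \cS_T).
$$
Since $\mathbf{E}$ is assumed nonsingular, Lemma \ref{subspacedistlemma} gives the exact value
$$
\dist(\cS_\cX, \cS_T) = \sqrt{1 - \frac{1}{\lambda_{\max}(K[\cX]^T (\mathbf{E}\mathbf{E}^*)^{-1})}},
$$
and substituting this into the preceding bound yields the desired inequality, once we remove the absolute value.

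To justify removal of the absolute value, I would invoke Theorem \ref{KL} applied to the setting described after its proof, with $F(t) = K(t,\cdot)$ and $\cH = \cH_K$. Since the eigenfunctions $e_1,\dots,e_n$ corresponding to the largest $n$ eigenvalues of the operator $T$ in (\ref{operatorTK1}) span a minimizer of (\ref{KLproblem}), we have $\cE(\cS_T) \le \cE(\cS_\cX)$, so $\cE(\cS_\cX) - \cE(\cS_T) = |\cE(\cS_\cX) - \cE(\cS_T)|$, and the bound goes through cleanly.

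There is essentially no obstacle here, since the two ingredients do all the work; the proof is just a concatenation. The only minor point worth being explicit about is the appeal to Theorem \ref{KL} to ensure $\cE(\cS_T)$ really is the infimum over $n$-dimensional subspaces (so that the left-hand side of the target inequality is nonnegative and equals $|\cE(\cS_\cX) - \cE(\cS_T)|$), and to confirm that the standing hypotheses of Lemma \ref{subspacedistlemma} (nonsingularity of $\mathbf{A}$, $\mathbf{B}$, and $K[\cX]$) are compatible with the sole hypothesis we retain here, namely that $\mathbf{E}$ is nonsingular; in practice these are the same running assumptions made in Section 4.
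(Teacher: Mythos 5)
Your proposal is correct and matches the paper exactly: the paper proves Theorem \ref{subspacedist} by simply combining Lemma \ref{boundbydist} with Lemma \ref{subspacedistlemma}, which is precisely your concatenation. Your extra remark that $\cE(\cS_T)\le\cE(\cS_\cX)$ by Theorem \ref{KL} (so the absolute value can be dropped) is a small point the paper leaves implicit, and it is right.
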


We conclude that the subspace approximation approach leads to the following problem
\begin{equation}\label{algorithm3}
\min_{\cX\in X^n}\lambda_{\mbox{max}}(K[\mathcal{X}]^T(\mathbf{E}\mathbf{E}^{*})^{-1})
\end{equation}
to be solved for the searching of optimal sampling points. We remark that when the measure $\mu$ is discrete as in most practical applications, (\ref{algorithm3}) is computationally favorable over (\ref{algorithm1}). The reason is that in this case, an orthonormal basis for the optimal subspace $\cS_T$ can be easily computed by the Karhunen-Lo\`{e}ve transform. At each stage of searching for the candidate sampling points $\cX$, the matrix ${\bf E}$ can be obtained efficiently and the major computation occurs with taking the inverse of a matrix. As comparison, algorithm (\ref{algorithm1}) additional requires the computation of the matrix $\bK$ and its square root.

\section{Numerical Experiments}
\setcounter{equation}{0}In this section, we give some numerical experiments to illustrate the performance of algorithms (\ref{algorithm1}) and (\ref{algorithm3}) for the searching of optimal sampling points. To this end, we first recall by Lemma \ref{optimalalgorithm} that for an obtained $n$ sampling points $\cX=\{x_j:1\le j\le n\}\in X^n$, the optimal method of reconstructing $\tilde{f}$ of a given function $f\in\cH_K$ from the sampled data $f(\cX)$ is given by
\begin{equation}\label{reconstruction}
\tilde{f}(x)=\sum_{j=1}^n\alpha_jK(x_j,x),\quad x\in X,
\end{equation}
where the coefficients $\alpha_j,1\le j\le n,$ are the unique solution of the linear system
\begin{equation}\label{coefficients}
\sum_{j=1}^nK(x_j,x_k)\alpha_j=f(x_k),\quad 1\le k\le n.
\end{equation}
Here we assume throughout the section that the kernel matrix $K[\cX]$ is nonsingular.

Therefore, our procedure of experiments is as follows. We shall consider the Gaussian kernel
$$
K(x,y)=e^{-\|x-y\|^2},\ x,y\in\bR^d
$$
and the sinc kernel
$$
K(x,y)=\prod_{j=1}^d \frac{\sin\pi(x_j-y_j)}{\pi(x_j-y_j)}, \ x,y\in\bR^d.
$$
Let $K$ be one of these two kernels, $X=\Omega\in\bR^d$ be compact, and $\mu$ be a selected Borel measure on $\Omega$. We then solve the optimization problem (\ref{algorithm1}) or (\ref{algorithm3}) to obtain $n$ sampling points $\cX_{\mbox{opt}}$, which are to be compared with the commonly used equally-spaced sampling points $\cX_{\mbox{equ}}$. For this purpose, we randomly generate 100 finite linear combinations $f$ of the kernel
$$
f=\sum c_jK(z_j,\cdot)
$$
as the target functions to be sampled, where both the coefficients $c_j$'s and the locations $z_j$'s will be randomly generated by the uniform distribution. For each of those target functions $f$, we then compute by (\ref{reconstruction}) and (\ref{coefficients}) the reconstructed functions $\tilde{f}_{\mbox{opt}}$ and $\tilde{f}_{\mbox{equ}}$ from the sampled values of $f$ on $\cX_{\mbox{opt}}$ and $\cX_{\mbox{equ}}$, respectively. Finally, the relative approximation errors
$$
\cE_{\mbox{opt}}:=\frac{\|\tilde{f}_{\mbox{opt}}-f\|_{L^2(\Omega)}}{\|f\|_{L^2(\Omega)}}, \ \ \cE_{\mbox{equ}}:=\frac{\|\tilde{f}_{\mbox{equ}}-f\|_{L^2(\Omega)}}{\|f\|_{L^2(\Omega)}}.
$$
are calculated.

To present the results, we shall first plot $\cX_{\mbox{opt}}$ against $\cX_{\mbox{equ}}$. The mean and standard deviation of the difference $\cE_{\mbox{equ}}-\cE_{\mbox{opt}}$ for the 100 pairs of relative errors will then be tabulated. Finally, we plot the 100 pairs of relative errors for a visual comparison, followed by discussion.\newline

\noindent{{\bf  \large Experiment 1: algorithm (\ref{algorithm1}), $K=$ the one-dimensional Gaussian kernel, $n=12$, $\Omega=[-3,3]$, $\mu=$ the Lebesgue measure on $\Omega$.}}

{\bf Figure 5.1} Distribution of the obtained 12 optimal sampling points (marked with a star) and the equally-spaced points (marked with a circle) on $\Omega=[-3,3]$.
\begin{center}
\scalebox{0.5}[0.6]{\includegraphics*{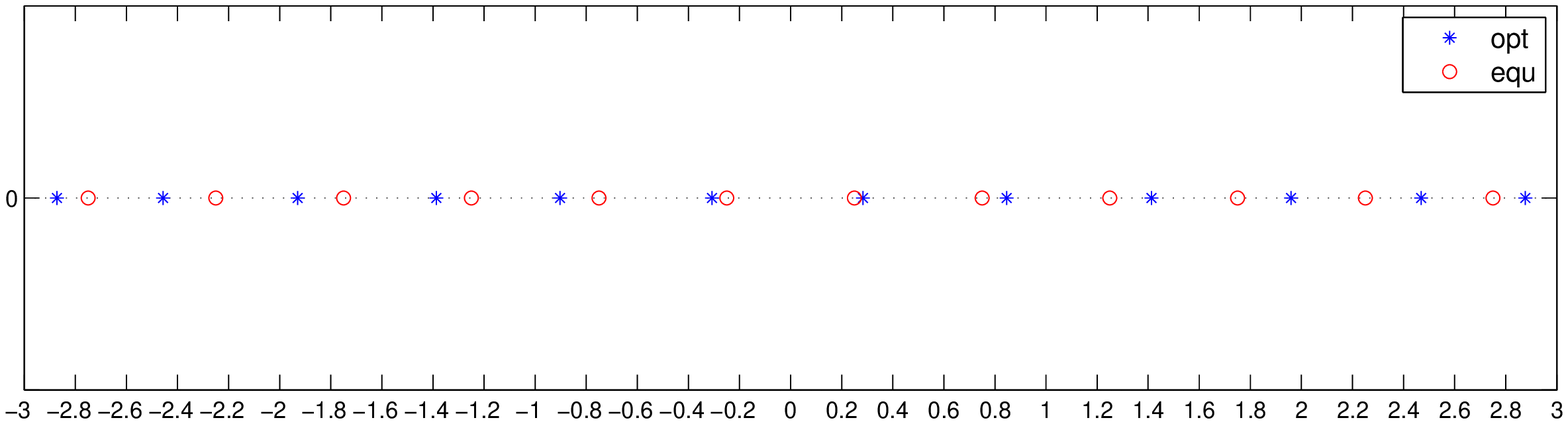}}
\end{center}

{\bf Table 5.1} The mean and standard deviation of the improvement $\cE_{\mbox{equ}}-\cE_{\mbox{opt}}$.
$$
\begin{array}{cc}
\hline\hline
\mbox{mean}&\mbox{standrad deviation}\\
0.3705\times 10^{-3}  &  0.5049\times 10^{-3}\\
\hline\hline
\end{array}
$$

{\bf Figure 5.2} Relative approximation errors $\cE_{\mbox{opt}}$ (marked with a circle) and $\cE_{\mbox{equ}}$ (marked with a star).
\begin{center}
\scalebox{0.5}[0.6]{\includegraphics*{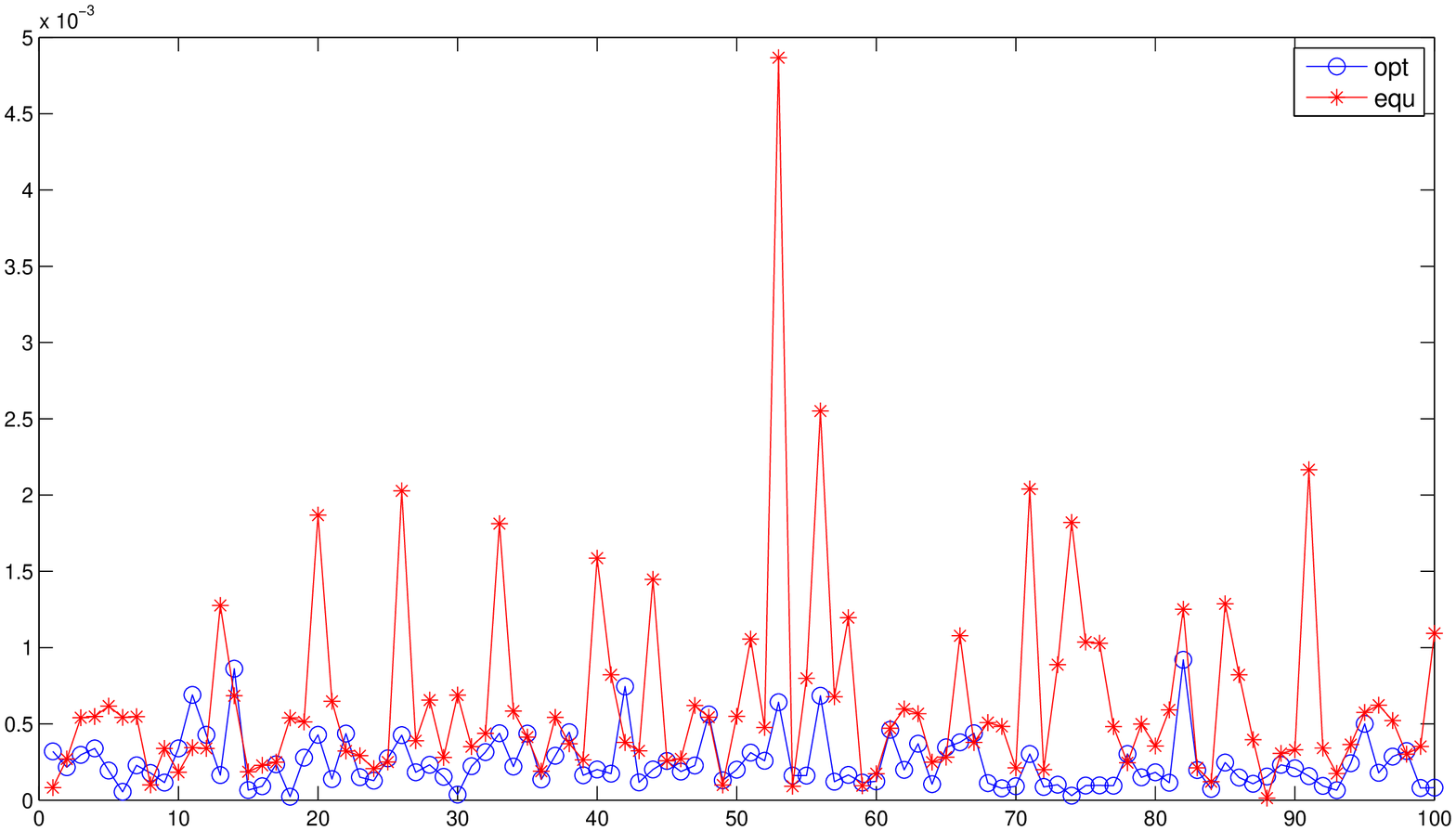}}
\end{center}

We observe that for the 100 pairs of relative approximation errors, there are only 20 pairs for which $\cE_{\mbox{opt}}$ is larger than $\cE_{\mbox{equ}}$. Recall that the optimal sampling points are designed to ensure that it is best in average for all the functions in the RKHS $\cH_K$. Therefore, situations where the optimal sampling points perform worse than the equally-spaced sampling points could indeed occur. For this experiment, one sees that in those 20 instances, the relative errors $\cE_{\mbox{opt}}$ and $\cE_{\mbox{equ}}$ are comparable. More importantly, for all the instances where the relative error corresponding to the equally-spaced sampling points exceeds $1\times10^{-3}$, the usage of the optimal sampling points can always bring down the relative error to below $1\times10^{-3}$. We conclude that for this example the obtained optimal sampling points are superior to the equally-spaced points.

\medskip

\noindent {\bf \large  Experiment 2: algorithm (\ref{algorithm1}), $K=$ the one-dimensional Sinc kernel, $n=8$, $\Omega=[-3,3]$, $\mu=$ the Lebesgue measure on $\Omega$.}

{\bf Figure 5.3} Distribution of the obtained 8 optimal sampling points (marked with a star) and the equally-spaced points (marked with a circle) on $\Omega=[-3,3]$.
\begin{center}
\scalebox{0.5}[0.5]{\includegraphics*{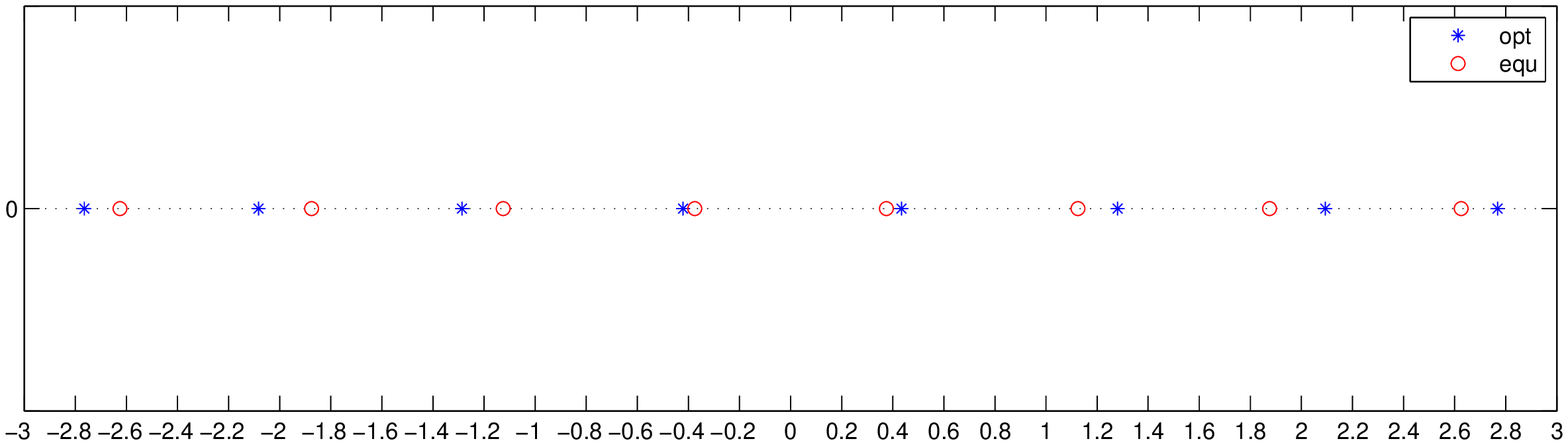}}
\end{center}

{\bf Table 5.2} The mean and standard deviation of the improvement $\cE_{\mbox{equ}}-\cE_{\mbox{opt}}$.
$$
\begin{array}{cc}
\hline\hline
\mbox{mean}&\mbox{standrad deviation}\\
0.0018 &   0.0026\\
\hline\hline
\end{array}
$$

{\bf Figure 5.4} Relative approximation errors $\cE_{\mbox{opt}}$ (marked with a circle) and $\cE_{\mbox{equ}}$ (marked with a star).
\begin{center}
\scalebox{0.5}[0.6]{\includegraphics*{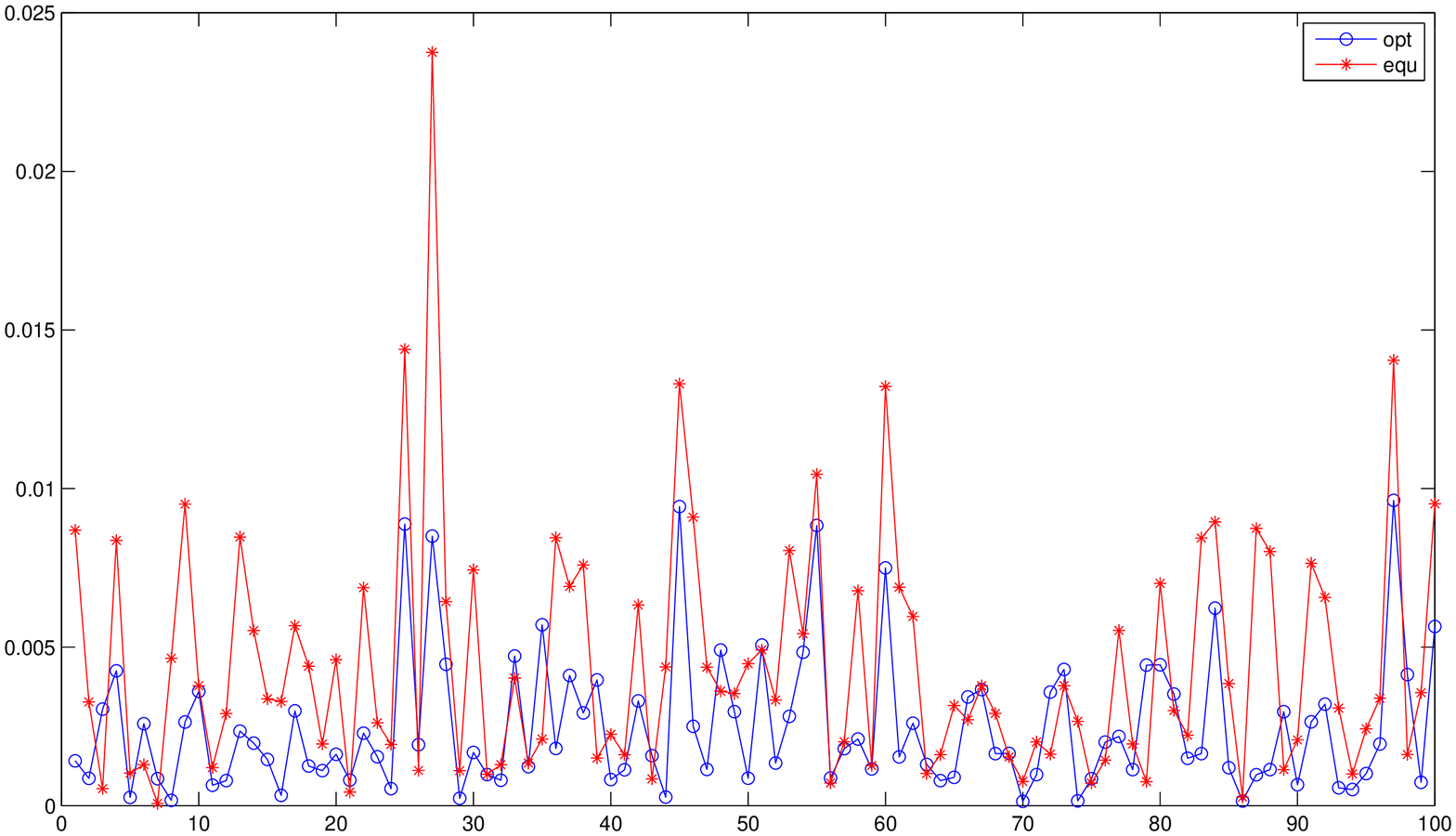}}
\end{center}

For the 100 pairs of relative approximation errors, there are 23 pairs for which $\cE_{\mbox{opt}}$ is larger than $\cE_{\mbox{equ}}$. There are 34 $\cE_{\mbox{equ}}$ (compared to 10 $\cE_{\mbox{opt}}$) that are larger than $5\times 10^{-3}$. And in 26 instances among those 34, replacing the equally-spaced points with the optimal sampling points reduces the relative approximation error to below $5\times 10^{-3}$. We also conclude that for this example the obtained optimal sampling points perform better than the equally-spaced points, although the improvement is not as drastic as Experiment 1.

\bigskip

\noindent {\bf \large  Experiment 3: algorithm (\ref{algorithm1}), $K=$ the two-dimensional Gaussian kernel, $n=36$, $\Omega=[-2,2]\times[-2,2]$, $\mu=$ the Lebesgue measure on $\Omega$.}

{\bf Figure 5.5} Distribution of the obtained 36 optimal sampling points (marked with a star) and the equally-spaced points (marked with a circle) on $[-2,2]\times[-2,2]$.
\begin{center}
\scalebox{0.5}[0.5]{\includegraphics*{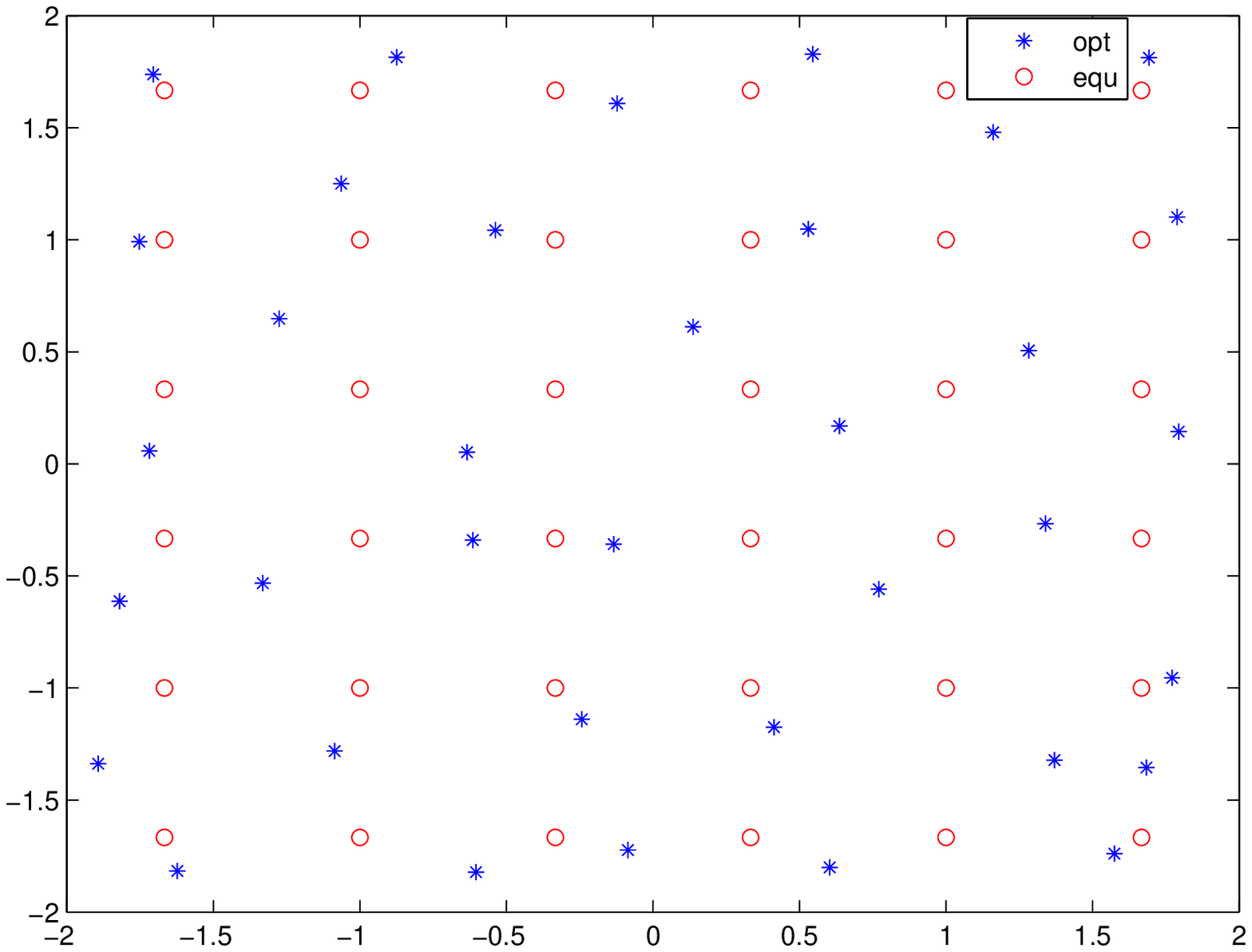}}
\end{center}

{\bf Table 5.3} The mean and standard deviation of the improvement $\cE_{\mbox{equ}}-\cE_{\mbox{opt}}$.
$$
\begin{array}{cc}
\hline\hline
\mbox{mean}&\mbox{standrad deviation}\\
0.0028  &  0.0043\\
\hline\hline
\end{array}
$$

{\bf Figure 5.6} Relative approximation errors $\cE_{\mbox{opt}}$ (marked with a circle) and $\cE_{\mbox{equ}}$ (marked with a star).
\begin{center}
\scalebox{0.8}[0.45]{\includegraphics*{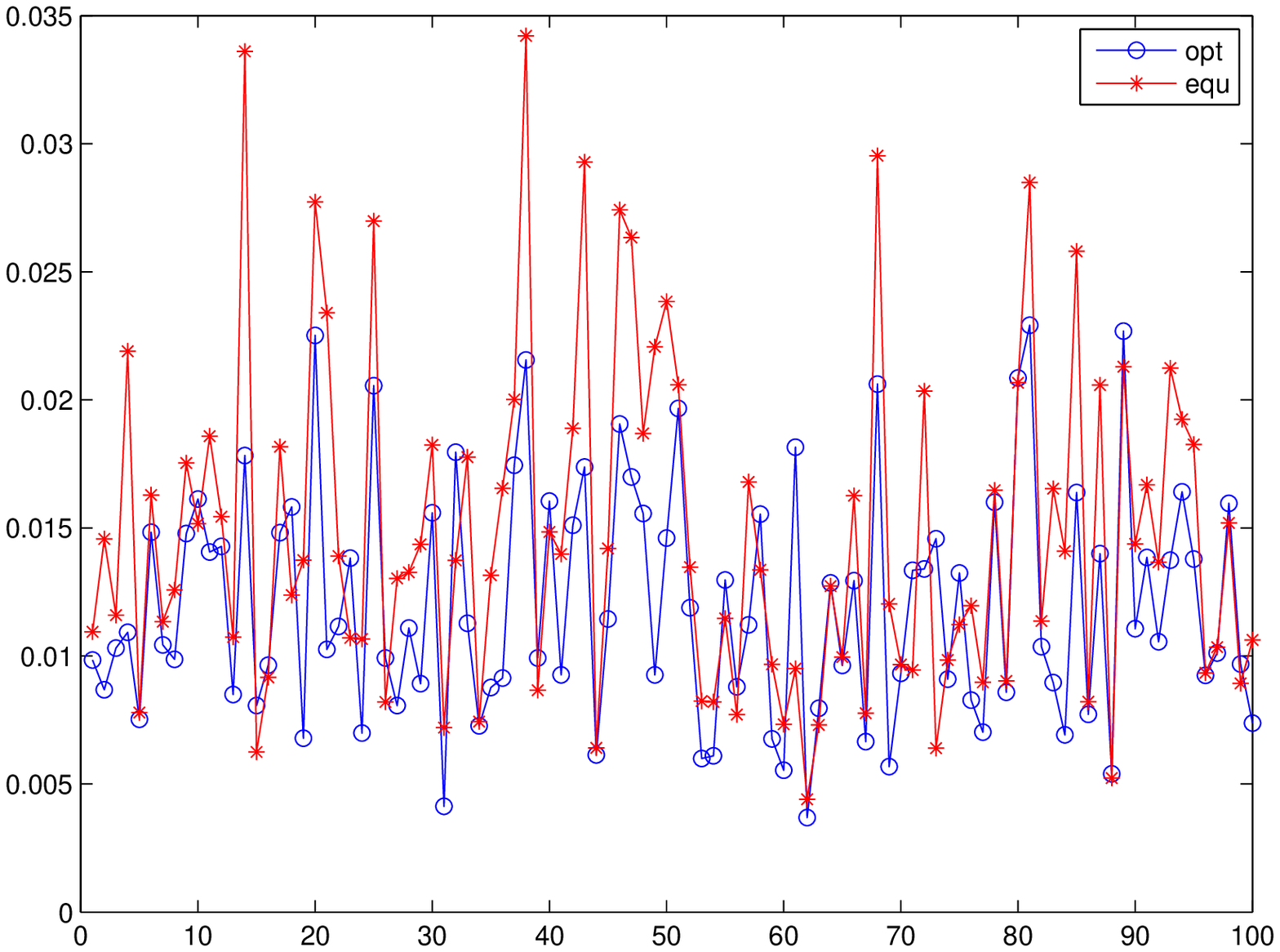}}
\end{center}

For the 100 pairs of relative approximation errors, there are 23 pairs for which $\cE_{\mbox{opt}}$ is larger than $\cE_{\mbox{equ}}$. In these pairs, $\cE_{\mbox{equ}}$ and $\cE_{\mbox{opt}}$ are rather close. We see that the value of the optimal sampling points lies in that they could dramatically reduce the relative error when the equally-spaced points perform badly. There are 10 such examples in Figure 5.6.

\bigskip

\noindent {\bf \large  Experiment 4: algorithm (\ref{algorithm1}), $K=$ the two-dimensional Sinc kernel, $n=25$, $\Omega=[-2,2]\times[-2,2]$, $\mu=$ the Lebesgue measure on $\Omega$.}

{\bf Figure 5.7} Distribution of the obtained 25 optimal sampling points (marked with a star) and the equally-spaced points (marked with a circle) on $\Omega=[-2,2]\times[-2,2]$.
\begin{center}
\scalebox{0.5}[0.5]{\includegraphics*{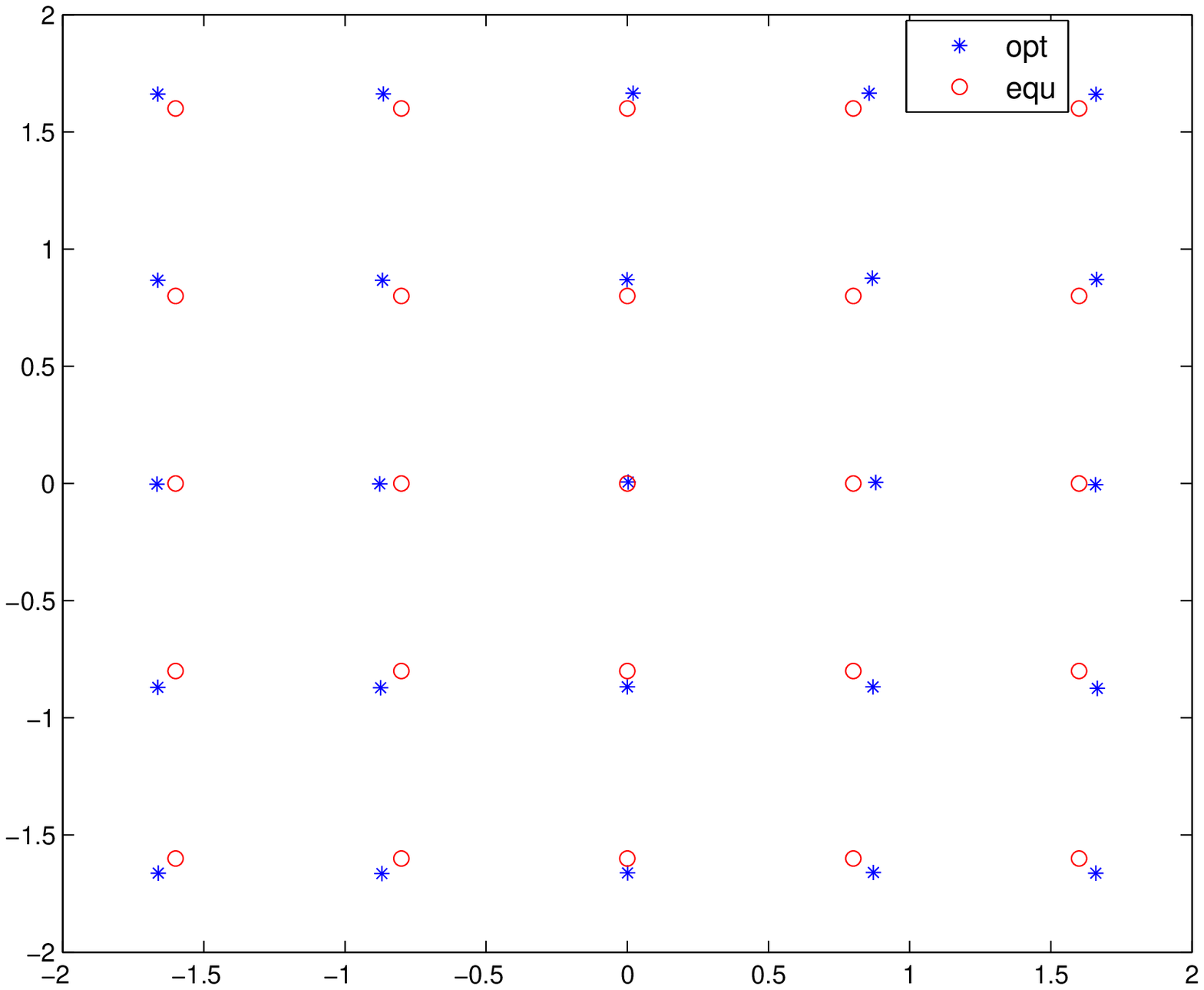}}
\end{center}

{\bf Table 5.4} The mean and standard deviation of the improvement $\cE_{\mbox{equ}}-\cE_{\mbox{opt}}$.
$$
\begin{array}{cc}
\hline\hline
\mbox{mean}&\mbox{standrad deviation}\\
0.0020  &  0.0045\\
\hline\hline
\end{array}
$$

{\bf Figure 5.8} Relative approximation errors $\cE_{\mbox{opt}}$ (marked with a circle) and $\cE_{\mbox{equ}}$ (marked with a star).
\begin{center}
\scalebox{0.7}[0.5]{\includegraphics*{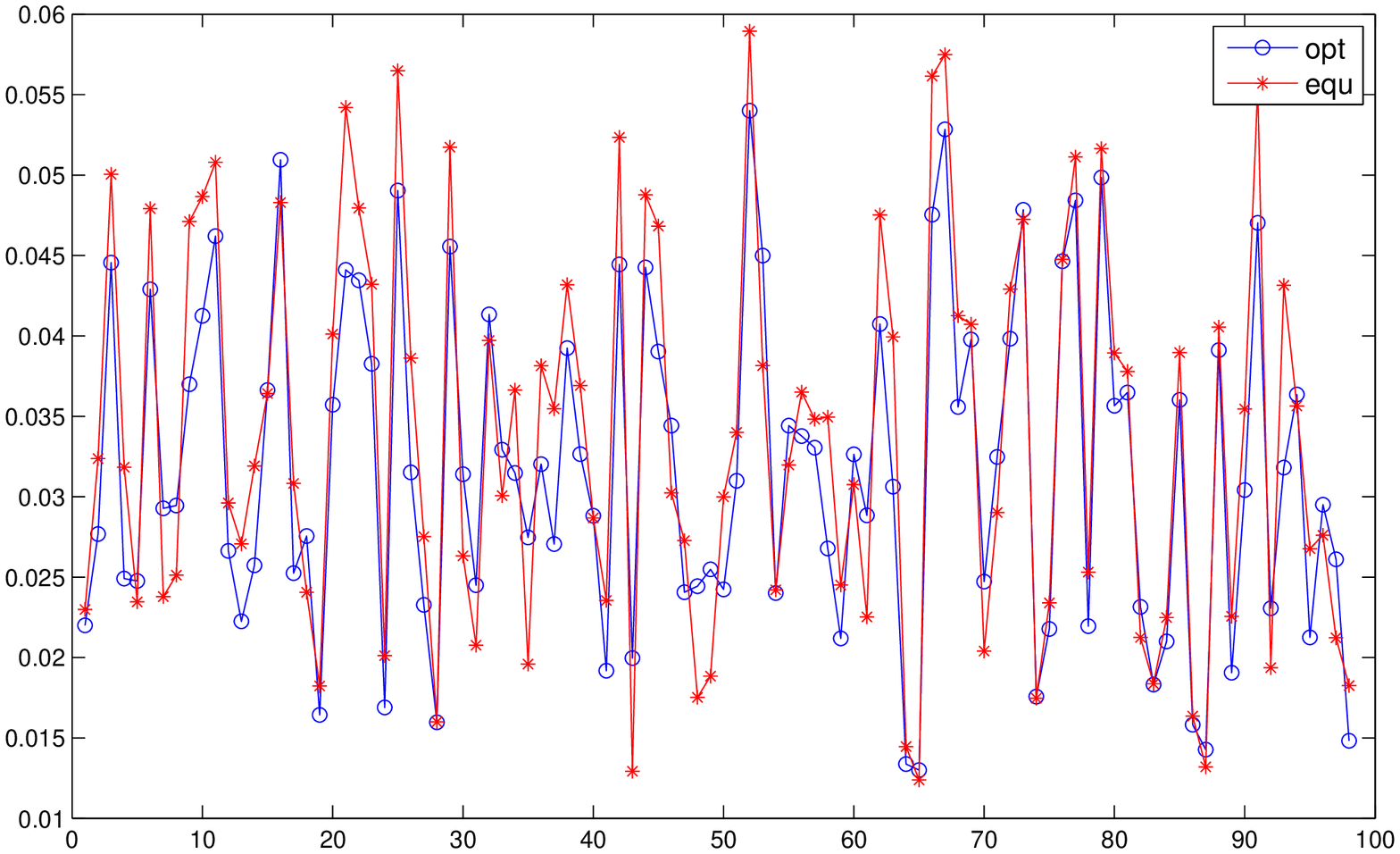}}
\end{center}

In the 100 pairs of relative approximation errors, there are 31 pairs for which $\cE_{\mbox{opt}}$ is larger than $\cE_{\mbox{equ}}$. We see from Figure 5.7 that for this example, the obtained optimal sampling points are rather close to the equally-spaced points. As a consequence, the relative approximation errors shown in Figure 5.8 are comparable.

\bigskip

In the following, we present two experiments about algorithm (\ref{algorithm3}).

\medskip

\noindent {\bf \large  Experiment 5: algorithm (\ref{algorithm3}), $K=$ the one-dimensional Gaussian kernel, $n=12$, $\Omega=[-3,3]$, $\mu$ is the uniform discrete measure supported at the 30 equally-spaced points in $\Omega$.}

{\bf Figure 5.9} Distribution of the obtained 12 optimal sampling points (marked with a star) and the equally-spaced points (marked with a circle) on $\Omega=[-3,3]$.
\begin{center}
\scalebox{0.5}[0.5]{\includegraphics*{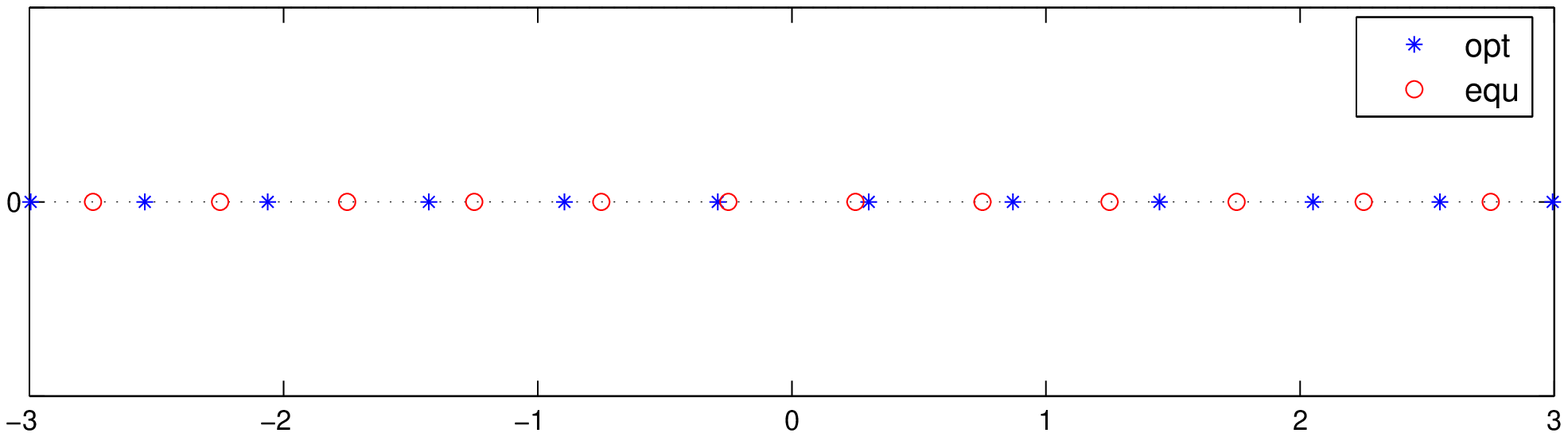}}
\end{center}

{\bf Table 5.5} The mean and standard deviation of the improvement $\cE_{\mbox{equ}}-\cE_{\mbox{opt}}$.
$$
\begin{array}{cc}
\hline\hline
\mbox{mean}&\mbox{standrad deviation}\\
  0.7528\times10^{-3}  &  0.9825\times10^{-3}\\
\hline\hline
\end{array}
$$
{\bf Figure 5.10} Relative approximation errors $\cE_{\mbox{opt}}$ (marked with a circle) and $\cE_{\mbox{equ}}$ (marked with a star).
\begin{center}
\scalebox{0.5}[0.5]{\includegraphics*{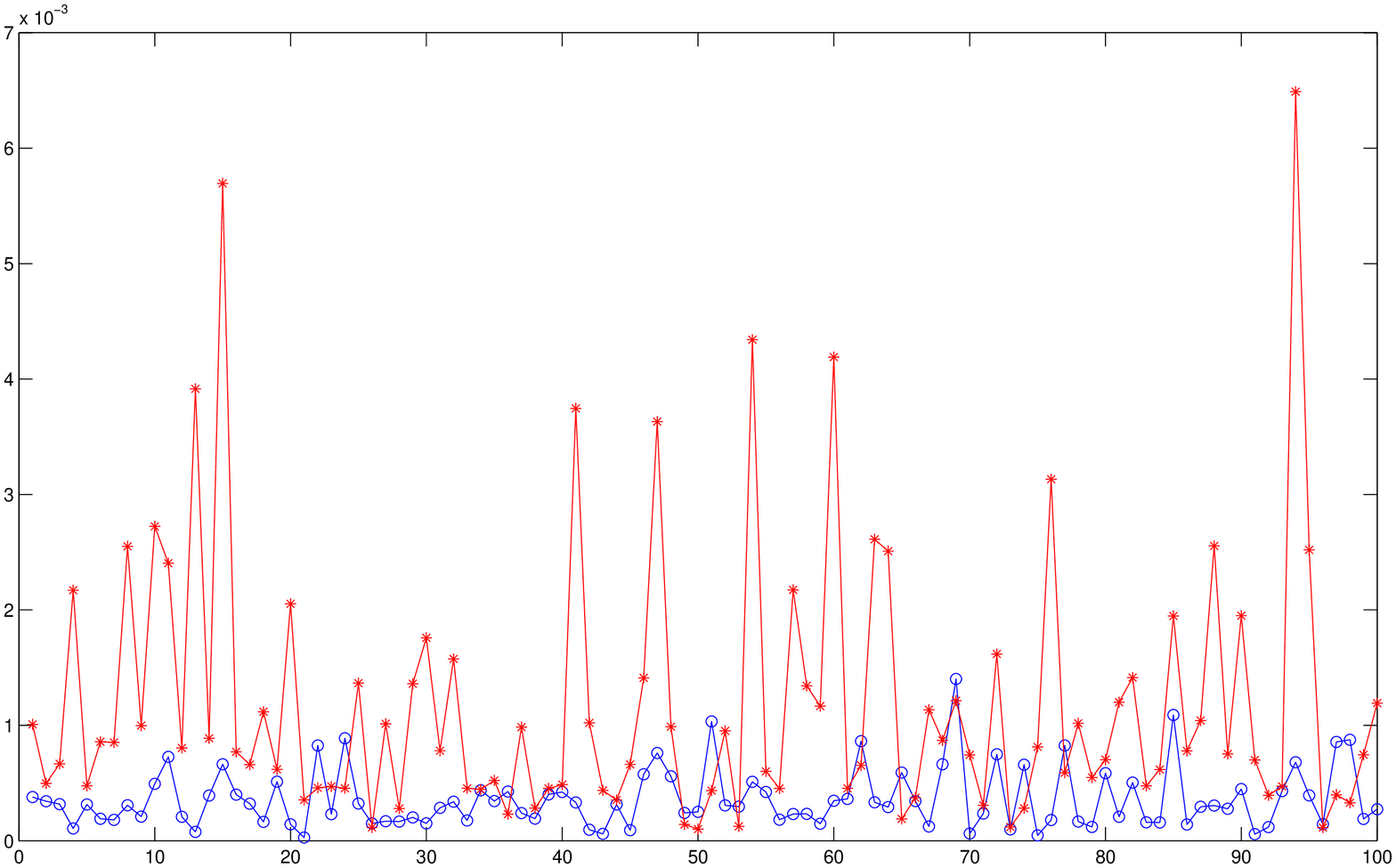}}
\end{center}

In the 100 pairs of relative approximation errors, there are only 16 pairs for which $\cE_{\mbox{opt}}$ is larger than $\cE_{\mbox{equ}}$. One sees that in those 16 instances, the relative errors $\cE_{\mbox{opt}}$ and $\cE_{\mbox{equ}}$ are comparable. For the remaining 84 instances, the improvement brought by the optimal sampling points resulting from algorithm (\ref{algorithm3}) is drastic. In particular, there are 39 instances where $\cE_{\mbox{equ}}$ exceeds $10^{-3}$ while only three $\cE_{\mbox{opt}}$ do so. Comparing results here with those in Experiment 1, one sees that algorithm (\ref{algorithm3}) is superior to (\ref{algorithm1}) for this problem.

\bigskip

\noindent {\bf \large  Experiment 6: algorithm (\ref{algorithm3}), $K=$ the one-dimensional Sinc kernel, $n=8$, $\Omega=[-3,3]$, $\mu$ is the uniform discrete measure supported at the 20 equally-spaced points in $\Omega$.}

{\bf Figure 5.11} Distribution of the obtained 8 optimal sampling points (marked with a star) and the equally-spaced points (marked with a circle) on $\Omega=[-3,3]$.
\begin{center}
\scalebox{0.5}[0.5]{\includegraphics*{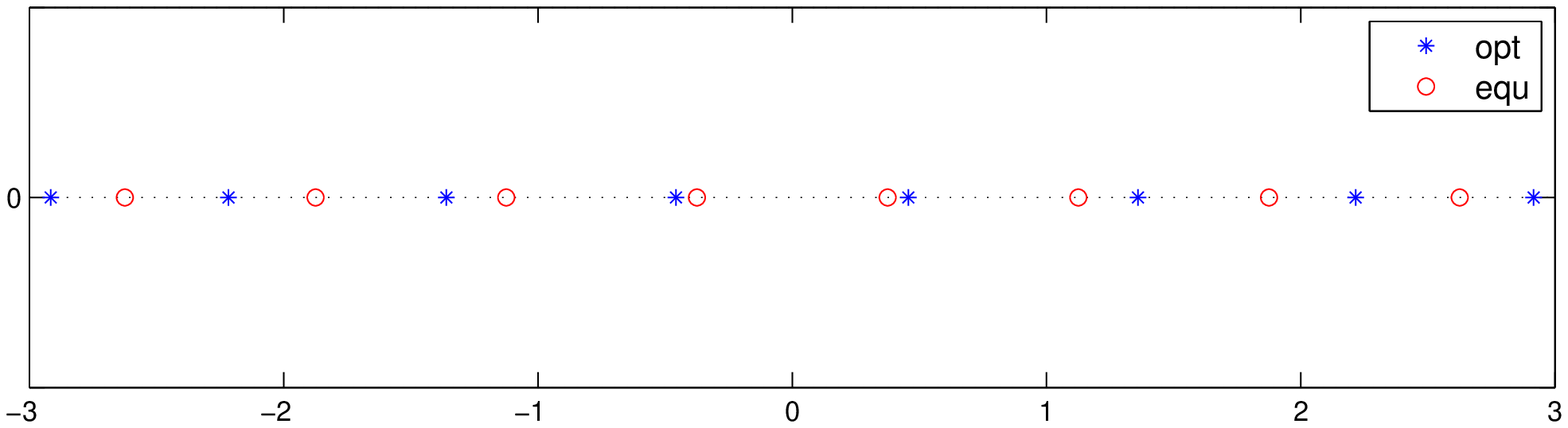}}
\end{center}

{\bf Table 5.6} The mean and standard deviation of the improvement $\cE_{\mbox{equ}}-\cE_{\mbox{opt}}$.
$$
\begin{array}{cc}
\hline\hline
\mbox{mean}&\mbox{standrad deviation}\\
  0.0035  &  0.0063\\
\hline\hline
\end{array}
$$
{\bf Figure 5.12} Relative approximation errors $\cE_{\mbox{opt}}$ (marked with a circle) and $\cE_{\mbox{equ}}$ (marked with a star).
\begin{center}
\scalebox{0.6}[0.5]{\includegraphics*{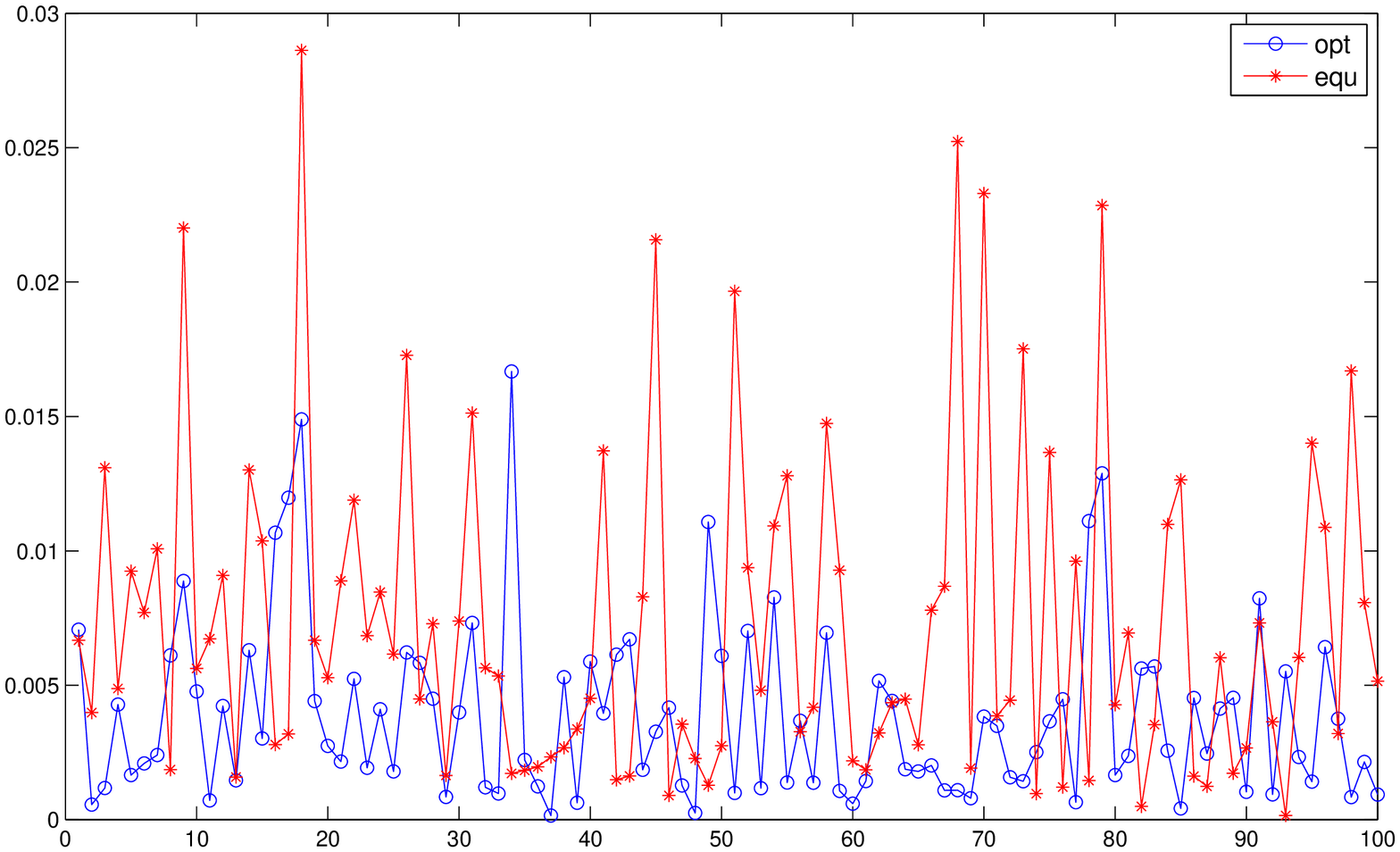}}
\end{center}

In the 100 pairs of relative approximation errors, there are only 28 pairs for which $\cE_{\mbox{opt}}$ is larger than $\cE_{\mbox{equ}}$. Except for 5 outliers, $\cE_{\mbox{opt}}/\cE_{\mbox{equ}}<5$ for those instances. For the remaining 72 improved instances, there are 21 for which $\cE_{\mbox{equ}}/\cE_{\mbox{opt}}>5$ and 8 for which $\cE_{\mbox{equ}}/\cE_{\mbox{opt}}>10$. We conclude that the optimal sampling points yielding from algorithm (\ref{algorithm3}) are significantly better than the equally-spaced points. The results here outperform those in Experiment 2.

\medskip

\section{Appendix: proof of Example \ref{example2}}
\setcounter{equation}{0}
We shall prove that the optimal sampling points for Example \ref{example2} are given by (\ref{optpoint1}) and (\ref{optpoint2}). The proof is done by considering each case of the relative location of the two sampling point with respect to the reconstruction domain $\Omega=[a,b]$.

\begin{description}
\item{Case 1}: $x_1,x_2$ lie on the right hand of $\Omega$. We set $u=x_1-b$ and $r=x_2-x_1$. Then there holds
\begin{eqnarray*}
\min_{x\in\Omega}V(x,x_1,x_2)&=&\frac{e^{-2(u+L)}+e^{-2(u+r+L)}-2e^{-2(u+r+L)}}{1-e^{-2r}}\\
&=&e^{-2(u+L)}.
\end{eqnarray*}
It is easy to see that
\begin{equation}\label{max1}
\sup_{x_1,x_2\in\mathbb{R}^d}\min_{x\in\Omega}V(x,x_1,x_2)=e^{-2L}
\end{equation}
and the supremum is achieves when $u=0$.

\item{Case 2}: $x_1,x_2$ lie on the left hand and the right hand of $\Omega$, respectively. We set $t=a-x_1$ and $s=x_2-b$. For each $x\in\Omega$, we also let $u=x-a$. By these notations, we get that
$$
\min_{x\in\Omega}V(x,x_1,x_2)=\min_{u\in[0,L]}\frac{e^{-2(u+t)}+e^{-2(L-u+s)}-2e^{-2(L+s+t)}}{1-e^{-2(L+s+t)}}.
$$
If $t\geq L+s$, we obtain that the  minimum achieves at $u=0$ and
$$
\min_{x\in\Omega}V(x,x_1,x_2)=\frac{e^{-2t}+e^{-2(L+s)}-2e^{-2(L+s+t)}}{1-e^{-2(L+s+t)}},
$$
which is decreasing with respect to $s$ and $t$. Hence, we get the conclusion that
$$
\sup_{x_1,x_2\in\mathbb{R}^d}\min_{x\in\Omega}V(x,x_1,x_2)=\frac{2e^{-2L}}{1+e^{-2L}},
$$
where the supremum achieves at $s=0$ and $t=L$. Similarly, for the case when $s\geq t+L$, we also get that
$$
\sup_{x_1,x_2\in\mathbb{R}^d}\min_{x\in\Omega}V(x,x_1,x_2)=\frac{2e^{-2L}}{1+e^{-2L}}.
$$
For the case when $|s-t|\leq L$, the minimum achieves at $u=\frac{L+s+t}{2}$ and there holds
$$
\min_{x\in\Omega}V(x,x_1,x_2)=\frac{2e^{-(L+s+t)}}{1+e^{-(L+s+t)}}.
$$
By taking the supremum of the above equation, we have
$$
\sup_{x_1,x_2\in\mathbb{R}^d}\min_{x\in\Omega}V(x,x_1,x_2)=\frac{2e^{-L}}{1+e^{-L}}.
$$
It follows from the inequality
$$
\frac{2e^{-L}}{1+e^{-L}}>\frac{2e^{-2L}}{1+e^{-2L}}
$$
that in case (2), there holds
\begin{equation}\label{max2}
\sup_{x_1,x_2\in\mathbb{R}^d}\min_{x\in\Omega}V(x,x_1,x_2)=\frac{2e^{-L}}{1+e^{-L}},
\end{equation}

\item{Case 3}: $x_0, x_1\in\Omega$. We set $u=x_0-a, v=b-x_1$ and $r=x_1-x_0$. If $x=a$, we have that
$$
V(x,x_1,x_2)=\frac{e^{-2u}+e^{-2(u+r)}-2e^{-2(u+r)}}{1-e^{-2r}}=e^{-2u}.
$$
Similarly, we also get for $x=b$ that
$$
V(x,x_1,x_2)=e^{-2v}.
$$
If $x\in[x_1,x_2]$, there holds
$$
V(x,x_1,x_2)=\frac{e^{-2|x-x_0|}+e^{-2(r-|x-x_0|)}-2e^{-2r}}{1-e^{-2r}}.
$$
Thus the minimum of $V$ achieves at $|x-x_0|=\frac{r}{2}$ and there holds
$$
\min_{x\in[x_1,x_2]}V(x,x_1,x_2)=\frac{2e^{-r}}{1+e^{-r}}.
$$
According to the above discussion, we need to consider
$$
\sup_{u\geq v}\min\left\{e^{-2u}, e^{-2v}, \frac{2e^{-(L-u-v)}}{1+e^{-(L-u-v)}}\right\}.
$$
It is not difficult to see that
$$
\sup_{u\geq v}\min\left\{e^{-2u}, e^{-2v}, \frac{2e^{-(L-u-v)}}{1+e^{-(L-u-v)}}\right\}=e^{-2u},
$$
where there holds
\begin{equation}\label{equationu}
e^{-2u}=\frac{2e^{-(L-2u)}}{1+e^{-(L-2u)}}.
\end{equation}
By solving equation (\ref{equationu}), we obtain
$$
u=-\frac{1}{2}\ln\left(\frac{-e^{-L}+\sqrt{e^{-2L}+8e^{-L}}}{2}\right)
$$
and
\begin{equation}\label{max3}
\sup_{u\geq v}\min\left\{e^{-2u}, e^{-2v}, \frac{2e^{-(L-u-v)}}{1+e^{-(L-u-v)}}\right\}=
\frac{-e^{-L}+\sqrt{e^{-2L}+8e^{-L}}}{2}.
\end{equation}

\end{description}

It remains to compare (\ref{max1}), (\ref{max2}) and (\ref{max3}). By calculation, we have
$$
\frac{-e^{-L}+\sqrt{e^{-2L}+8e^{-L}}}{2}>\frac{2e^{-L}}{1+e^{-L}}>e^{-2L},
$$
which implies the optimal two sampling points should be placed inside $\Omega$ by (\ref{optpoint1}) and (\ref{optpoint2}).
\end{document}